\def\BibTeX{{\rm B\kern-.05em{\sc i\kern-.025em b}\kern-.08em
    T\kern-.1667em\lower.7ex\hbox{E}\kern-.125emX}}
\renewcommand{\fnum@algorithm}{\fname@algorithm}
\newcommand{\real}{\ensuremath{\mathbb{R}}}
\newcommand{\prob}{\ensuremath{\mathbb{P}}}
\newcommand{\naturals}{\ensuremath{\mathbb{N}}}
\newcommand{\subscr}[2]{#1_{\textup{#2}}}
\newcommand{\Norm}[1]{\|#1\|}
\newtheorem{theorem}{Theorem}
\newtheorem{proposition}[theorem]{Proposition}
\newtheorem{lemma}{Lemma}
\newcommand\l@method{\@dottedtocline{1}{1.5em}{2.3em}}\makeatother
\begin{document}

\history{Received 27 June 2023, accepted 11 August 2023, date of publication 21 August 2023, date of current version 23 October 2023.}
\doi{10.1109/ACCESS.2023.3307316}
\title{Quantum Search Approaches to Sampling-Based Motion Planning}

\author{\uppercase{Paul Lathrop}\authorrefmark{1}, \IEEEmembership{Student Member, IEEE},
\uppercase{Beth Boardman}\authorrefmark{2}, and Sonia Mart\'{i}nez\authorrefmark{3},
\IEEEmembership{Fellow, IEEE}}

\address[1]{Department of Mechanical and Aerospace Engineering, University of California, San Diego, La Jolla, CA, 92092 USA, and Los Alamos National Laboratory, Los Alamos, NM 87545 USA (e-mail: pdlathrop@gmail.com)}
\address[2]{Los Alamos National Laboratory, Los Alamos, NM 87545 USA (e-mail: bboardman@lanl.gov)}
\address[3]{Department of Mechanical and Aerospace Engineering, University of California, San Diego, La Jolla, CA, 92092 USA (e-mail:soniamd@ucsd.edu)}
\tfootnote{This work was supported by Los Alamos National Laboratory and is approved
for release under LA-UR-23-23623v3. This work is licensed under a Creative Commons Attribution 4.0 License.}

\markboth
{Lathrop \headeretal: Quantum Search Approaches to Sampling-Based Motion Planning}
{Lathrop \headeretal: Quantum Search Approaches to Sampling-Based Motion Planning}

\corresp{Corresponding author: Paul Lathrop (e-mail: pdlathrop@gmail.com).}

\begin{abstract}
In this paper, we present a novel formulation of traditional
sampling-based motion planners as database-oracle structures that can
be solved via quantum search algorithms. 
We consider two complementary scenarios: for simpler sparse
environments, we formulate the Quantum Full Path Search Algorithm
(q-FPS), which creates a superposition of full random path solutions,
manipulates probability amplitudes with Quantum Amplitude
Amplification (QAA), and quantum measures a single obstacle free full
path solution. For dense unstructured environments, we formulate the Quantum Rapidly Exploring Random Tree algorithm,
q-RRT, that creates quantum superpositions of possible parent-child
connections, manipulates probability amplitudes with QAA, and quantum
measures a single reachable state, which is added to a
tree. As  performance depends on the number of oracle calls and the probability of measuring good quantum states, we quantify how these errors factor into the probabilistic completeness properties of the algorithm. We then numerically
estimate the expected number of database solutions to provide 
 an approximation of the optimal number of
 oracle calls in the algorithm.
We compare the q-RRT algorithm with a classical implementation and verify quadratic
run-time speedup in the largest connected component of a 2D dense
random lattice. We conclude by evaluating a proposed approach to limit the expected number of database solutions and thus limit the optimal number of oracle calls to a given number.
\end{abstract}

\begin{keywords}
Sampling Based Motion Planning, Quantum Computing, Probability and Statistical Methods 
\end{keywords}

\titlepgskip=-21pt

\maketitle

\section{Introduction}

The emergence of digital electronic computing in the 1940s and 1950s brought widespread changes to virtually every area of human life. More recently, in 1980, Paul Benioff presented the quantum Turing machine~\cite{PB:80}, which outlined a simple computer using the principles of quantum mechanics to represent mixed states. The concept of quantum gates~\cite{DD:85}, which fulfill a similar function to the binary logic gates of classical computing, paved the way for the emerging field of quantum computing. Physically different from traditional computing, quantum computers leverage the quantum mechanical properties of physical matter to perform calculations simultaneously.
Quantum computation is on the horizon and awaits the development of reliable physical mediums to be used in practice~\cite{AF-NG-SB-AL:22}. Candidates for physical implementation of quantum bits (qubits) include superconducting circuits~\cite{MD-AW-JM:04} (with information storage in harmonic oscillations between energy levels of an inductor-capacitor circuit), the trapped ion quantum computer~\cite{JC-PZ:95} (with information storage in stable electronic ion states), and the semiconductor quantum dot quantum computer~\cite{DL-DD:98} (with information storage in nuclei spin states). However, the theory behind quantum computing is well established and has shown the potential to dramatically impact the solutions to many complex problems, such as in physics~\cite{IB-JD-SN:12}~\cite{MB-RB-JC-AC-JC:20} and chemistry~\cite{AK-AM-KT-MT:17} simulations, cryptography~\cite{EL-RB-YC-JK:12}~\cite{DJB-TL:17}, optimization~\cite{CCC-AG-TSH-SS:19}~\cite{FN-GC-CS-DVD-SY-BP:17}, and machine learning~\cite{IC-SC-MDL:19}.

Quantum algorithms such as Grover's
Algorithm and its generalization, Quantum Amplitude
Amplification (QAA), have a proven quadratic speedup in unstructured
database searches when compared to classical algorithms~\cite{LG:96}~\cite{GB-PH-MM-AT:02}. We believe this property
allows quantum algorithms to parallelize computationally heavy steps
in motion planning. Motivated by this, we seek to explore
how quantum algorithms and quantum speedup can be applied to sampling-based motion
planning algorithms in complex spaces with dynamic constraints.
%%%%%%%%%%%%%%%%%%%%%%%%%%%%%%%
\subsection{Literature Review} 
In this section, we provide a brief account of related works
  employing quantum computation in incidental problems in robotics,
  planning, and control theory. This is followed by a brief overview
  on sampling-based motion planning. 

With respect to motion planning, quantum algorithms have been
applied to reinforcement learning
in~\cite{DD-CC-LH:07},~\cite{DD-CC-JC-TJT:10},~\cite{DD-CC-HL-TJT:08}, and~\cite{DW-AS-RK-AK-MR:21}. Quantum methods have been shown to increase speed~\cite{DD-CC-LH:07} and robustness~\cite{DD-CC-JC-TJT:10} of state-action pair learning
algorithms in gridded environments when compared to temporal
difference epsilon-greedy and softmax choice strategies. Quantum
  reinforcement learning~\cite{DD-CC-HL-TJT:08} relies on
  encoding the state-action set as an eigen-state eigen-action set,
  with probability amplitudes characterized by quantum states in order
  to update the value function~\cite{KR-QZ-SNB-PF-JRB:21}. As
  is well known, exact reinforcement learning does not scale well to
  high-dimensional discrete state and action spaces. Even when using
  neural-network function approximations, the identification of the
  best reward functions for planning tasks in complex environments is
  an open question~\cite{LC-ZJ-LC-AK-MZ:22}. Instead, we seek to apply
  quantum computing methods to sampling-based motion planners to solve
  simpler path feasibility problems. This has the advantage to provide
  fast solutions in multi-dimensional environments with
  probabilistic completeness
  guarantees~\cite{SML:06}.

Simple robotic trajectory planning is addressed in the
work~\cite{LM:15}, which uses the Quantum Evolutionary
  Algorithm~\cite{KHH-JHK:02}, to obtain optimal trajectories
  with respect to an obstacle-distance-based objective function. A quantum genetic evolutionary
  algorithm is shown to compute trajectories in a two dimensional
  obstacle environment using
  a population-crossover-mutation workflow. This is enabled via particle swarm
  optimization (PSO); however, it is known that PSO approaches to
  motion planning suffer from a host of problems, including premature convergence, the inability to
  adapt to high dimensional search spaces (due to local optima traps and the potential to be restricted to a sub-plane of the entire search hyperplane), ambiguity in optimizer
  form (to yield both useful motion plans and solutions via PSO), and ad-hoc parameter tuning~\cite{AGG:22}.

Quantum methods have been applied to several other
  motion-planning-adjacent areas within robotics. The
work~\cite{CP-MB-HP-MH-BD:19} outlines the state of the art of
quantum computation (in terms of quantum algorithms) in robotic
science and helps frame open future research topics on
  sensing and perception, ``traditional artificial intelligence'' such
  as graph search algorithms, the integration of quantum
computers into robotic and distributed systems, and testing frameworks
for quantum computation. In particular, combinatorial graph search
  algorithms may be amenable to quantum speedup through the
  application of Grover's Algorithm, quantum annealing, or quantum
  random walks. Additionally,~\cite{CP-MB-HP-MH-BD:19} outlines
  applications of quantum algorithms to inverse kinematics and optimal
  planning problems for manipulators, by means of static optimization
  and model predictive control approaches. Here, we evaluate the
  integration of Grover's Algorithm and its extension, QAA, with
  sampling-based motion planners.  While this is unaddressed
  in~\cite{CP-MB-HP-MH-BD:19}, it aligns with the general proposed
  research agenda.  The review~\cite{NBD-FLP-APA:22} outlines the
state of the art of quantum mechanics and quantum control algorithms, addressing questions of controllability, open
  and closed loop control, and feedback control methods through the
  lens of quantum computing. The work at hand focuses on the
  computation of motion plans in obstacle environments with the help
  of quantum algorithms, rather than on the computation of feedback
  controls for quantum systems.

\IEEEpubidadjcol

The speed up of search algorithms via quantum computation has also
  received attention from other application areas; see the
textbook~\cite{RP:13}.  In
  particular, Grover's Quantum Search Algorithm has been used in~\cite{SA-AA:03} to
search a physical region, with special focus on 2D grids, with the
goal of addressing information storage constraints.  The authors
  define quantum query algorithms on predefined graphs, which could in
  theory be applied to algorithms such as the A* graph search
  algorithm~\cite{PH-NN-BR:68}. However, a proven advantage of
  sampling-based motion planners over A* approaches is that they automatically tune their
  resolution as the number of samples increases.

Quantum walks are used in~\cite{FM-AN-JR-MS:11} to find a marked
element in a discrete and finite state space. If the quantum walk is ergodic and symmetric, quadratic speedup is achieved with respect to classical Markov-chain counterparts~\cite{MS:04}. Similarly, quantum walks have been applied to search over more abstract spaces; see~\cite{ESB-JD-JGG-DZ:12}
on search engine network navigation.
Quantum walks are an extension of classical random walks, and
they require state space discretization. Instead, we seek to extend quantum speedup to tree-based planners that use randomness to find samples in continuous spaces, rather than performing motion planning over a discrete graph with random walks. This approach has been proven to efficiently solve difficult planning problems compared to methods based on discrete counterparts, and can also better handle robot dynamics.

Compared to other motion planning paradigms, sampling based motion planning avoids explicit construction of obstacle spaces in favor of performing collision checks on generated samples~\cite{SML:06}. We provide an introductory set of references, and readers are encouraged to consult the textbook~\cite{SML:06} for further reading.
In sampling based motion planning, the most
  commonly used algorithms are Probabilistic Roadmaps
  (PRM)~\cite{NA-YW:96} and the Rapidly-exploring Random Trees
  (RRT)~\cite{JB-LK-JL-TL-RM-PR:96}, both of which provide samples to grow graphs and trees
  respectively. These algorithms have been extended and modified in
  their sampling
  strategies~\cite{DH-ZS:04},~\cite{ST-MM-XT-NA:07},
  exploration~\cite{LJ-AY-SLV-TS:05},~\cite{BB-OB2:07},
  collision checking~\cite{JB-MV:02},~\cite{ZS-DH-TJ-HK-JR:05},
  speed and
  optimality~\cite{SK-EF:11},~\cite{KS-OS-DH:16},~\cite{YL-WW-YG-DW-ZF:20},
  and kinodynamic constraint
  satisfaction~\cite{JC-BD-JR-PX:88},~\cite{EF-MD-EF:02}, among
  other parameters and heuristics. An extended review of the field of
  sampling based motion planning and the relative merits and
  advantages of extending motion planning algorithms to satisfy
  certain parameters can be found at~\cite{ME-MS:14}. In this
  work, we apply quantum algorithms to basic RRTs specifically as they
  are able to find fast solutions in multi-dimensional systems, with
  no discretization required, and can account for robot dynamic
  constraints. This has made possible their widespread application in
  autonomous vehicle motion planning and complex object manipulation.
  Moving forward, the benefits of this approach can only be enhanced
  by integration with quantum computing tools. To the best of our
  knowledge, this work takes a first step in this direction.

Algorithm parallelization is related to quantum computation, as
  the heart of quantum speedup lies in the ability to perform
  simultaneous calculations on superpositions of
  states~\cite{AF-NG-SB-AL:22},~\cite{JPreskill:98}. Motion
planning algorithms have been rewritten for
multi-threading~\cite{DD-TS-JC:11}, parallel tree
creation~\cite{MStrandberg:04}, and parallel computation with
GPUs~\cite{JB-SK-EF:11}. In~\cite{DD-TS-JC:11}, the authors
  devise a message passing scheme and compare performance of several
  parallel RRT schemes, such as OR Parallel RRT, Distributed RRT, and
  Manager-Worker RRT. The work~\cite{JB-SK-EF:11} identifies the
collision checking procedure as the computationally expensive
portion of sampling-based motion planning and seeks to parallelize it.
We therefore target the collision checking procedure as the main
  candidate for quantum computing speedup. Although parallel
computation is not always a tractable solution, as with single
tree creation, path planning in dense spaces with dynamic constraints
can benefit from parallelization for quantum algorithm application.

As is detailed above, quantum search algorithms have been applied to several areas within and adjacent to robotics, such as optimization, machine learning, and estimation, but have yet to be directly applied to sampling-based motion planning algorithms, which is what we seek to accomplish here.

\subsection{Contributions}
In this work, we introduce two novel formulations of path planning
algorithms using QAA. In Quantum Full Path Search (q-FPS), we describe a quantum search over a database of
randomly generated paths from a start to a goal configuration over
sparse environments. Next, we describe a Quantum Rapidly Exploring Random Tree (q-RRT) algorithm that admits
reachable states to the tree through a quantum search of a randomly
constructed database of points.
  
The main contributions of this work are the following.

\begin{itemize}
\item Creation of a strategy for achieving path planning using
    quantum computing in sparse environments with Quantum Full Path
    Search (q-FPS);
\item Re-framing of RRTs for quantum computation with the
  algorithm Quantum RRT (q-RRT);
\item Analysis of the probabilistic completeness (PC) properties and
  derivations of key probability values of interest with respect
  to adding unreachable tree elements;
\item Characterization of oracle and measurement errors, how
  these errors affect PC, and how to ensure PC properties remain
  intact;
\item Simulations of the use of quantum algorithms for
  sampling-based motion planning and verification of quadratic
  speedup;
\item Numerical simulations regarding connectivity within 2D
  square random lattices for optimal QAA application and the
    creation of a sampling method for selecting (rather than
  estimating) the optimal number of QAA applications.
\end{itemize}

\subsection{Notation} The general notation used throughout this work
is as follows. Let $d \in \naturals$, we denote by $\real^d$ the
$d$-dimensional real vector space, and by $x\in \real^d$ a vector in
it.  We denote the Euclidean norm in $\real^d$ as $
\Norm{\cdot}_2$. Let $\mathcal{N}(y,\Sigma)$ refer to the Normal
distribution with mean $y\in \real^d$ and covariance
$\Sigma\in \real^{d\times d}$. Let $\ket{z}$ refer to the quantum
state represented by the qubit $z$. Let $\mathbb{E}$ be the
expectation operator. Let $\mathbb{U}(C)$ be the uniform distribution
over $C$, and let $\mathbb{C}$ be the space of complex numbers.
%%%%%%%%%%%%%%%%%%%%%%%%
\section{Quantum Computing Basics}
In this section, we introduce quantum computing basics, how
quantum algorithms can be used to solve motion planning problems, and
an explanation of Quantum Amplitude Amplification (QAA). An extended
introduction can be found at~\cite{JPreskill:98}
and~\cite{GB-PH-MM-AT:02}. A summary of pertinent
  information from these sources is presented below.

Instead of encoding information classically in bits of either $0$ or
$1$ states, quantum computers encode information in basic units called
quantum bits or \textit{qubits}~\cite{BS:95}.  A qubit is given as the
superposition of two basis quantum states, $\ket{0}$ and
$\ket{1}$. The latter two correspond to the two physical states 0 and
1, or the classical computing states. However, a qubit $\ket{\Psi}$
can exist in a superposition of $\ket{0}$ and $\ket{1}$, of the form
$\ket{\Psi} = \alpha \ket{0}+\beta\ket{1}$, with $\alpha,\beta \in
\mathbb{C},\;|\alpha|^2+|\beta|^2=1$. We say that
$\{\ket{0},\ket{1}\}$ defines a basis of quantum states.
In this way, a qubit can be given as a weighted superposition of the
basis states, meaning it can be thought of as physically existing
simultaneously in many states at once.

Quantum states in a superposition maintain probability amplitudes
$\alpha$ and $\beta$, or the relative likelihoods of measuring a
particular state of the superposition. The measurement process of the
quantum state involves the collapse of the quantum state $\ket{\Psi}$
to a base state $\{\ket{0},\ket{1}\}$ according to the measurement
probabilities $\alpha^2$ and $\beta^2$ (also known as the Born rule~\cite{MB:26}).

Qubits are placed in superpositions using the Walsh-Hadamard
  transform, a multidimensional Fourier operator which forms the
  quantum Hadamard
  gate~\cite{AB-CB-RC-DD-NM-PS-TS-JS-HW:95}. This is a unitary
  operator mapping a quantum state to an equal superposition of all
  qubit states. Since the Hadamard gate creates the superposition, it
  is key to simultaneous computation.

Quantum algorithms use superposition as a tool to perform fast and
efficient parallel computations on superpositions of states. A unitary
transformation will act on all basis vectors of the quantum state and
can simultaneously evaluate many values of a function $f(x)$ for many
inputs $x$ in a process known as quantum
parallelism~\cite{DD-CC-HL-TJT:08}. Although the probability
amplitudes $\alpha$ and $\beta$ of the system cannot be known
explicitly~\cite{MS:05}, quantum algorithms use quantum parallelism to manipulate
the amplitudes. Planning algorithms written for quantum methods can
be thought of as fully parallelized. In this paper, we intend to use
quantum algorithms in the following general way:

\begin{enumerate}
\item Identify an oracle function (or quantum black box) to check for
  configuration feasibility or path reachability.
\item Construct a database of possible paths or points.
\item Encode the database as a qubit register (i.e. a system comprising
  multiple qubits).
\item Create a superposition across all database elements.
\item Repeatedly apply QAA to increase the probability amplitude of
  the correct database elements.
\item Measure the qubit to return a single element.
\item Check the measured answer and repeat the process.
\end{enumerate}

This succinct description on how to apply QAA to a specific
  problem is inspired by the work~\cite{RO-SM-EL:19}, which
  applies quantum algorithms to financial analysis. We will use a
Boolean oracle function to evaluate the feasibility of a path and
later, the reachability to a state. In the context of quantum
  computing, a Boolean oracle function, represents a black-box
  function that is handed inputs and produces a Boolean, or
  True/False, output~\cite{BS-FM-GM-HR-GDM:21}. They are widely
  used in quantum algorithms to study complexity and runtime
  comparisons~\cite{AA:18}. We refer to feasibility as the
connectivity of a pair of points, and provable reachability refers to
whether, given a set of dynamics and a type of controller, we can
steer the system from a state to another with a reachable
obstacle-free path. Further discussion on local planning is included
in Section~IV. The actual state and environmental parameters are not
required to be explicitly known, but the Boolean output of this oracle
is assumed to be available.

Quantum Amplitude Amplification uses a Boolean oracle function
$\mathcal{X}$ to increase the probability of measuring a good state
$\Psi$. $\Psi$ is defined in terms of being a good state if and
only if $\mathcal{X}(\Psi) = 1$. The oracle function can be described
as a Phase Oracle, and it is a unitary operator which shifts all qubit
inputs by a constant phase. The QAA operator $Q$ then performs a pair
of probability amplitude reflections based upon the output of the
oracle. This results in the probability amplitude magnification of
good states and decrease of bad states. The QAA precise definition and
mechanism of action can be found at~\cite{GB-PH-MM-AT:02}, page 56. In
what follows, the QAA operator using oracle $\mathcal{X}$ is denoted
as $Q(\mathcal{X})$.

We will take advantage of the fact that QAA can perform a quantum
search on a size-$N$ unordered database for an oracle-tagged item in
$\mathcal{O}(N^{1/2})$ oracle calls, whereas classical search
algorithms require $\mathcal{O}(N)$ calls~\cite{GB-PH-MM-AT:02}.

%%%%%%%%%%%%%%%%%%%%%%%%%%%%
\section{Full Path Database Search with Quantum Amplitude Amplification}\label{section3}
In this section, we outline a first algorithm for path planning based on a direct application of QAA, with an illustration of its advantages over classical methods in a particular example.

We outline a path planning
algorithm, Quantum Full Path Search, Alg.~\ref{alg:QAA} (q-FPS), which
uses QAA to search a database $D$ of completed paths. The robot is
described by state $x\in \real^d$ which is constrained within a
compact configuration space, $C \subseteq \real^d$. Let
$\subscr{C}{free}$ denote the free space, or the space within $C$
outside of all static obstacles. The goal is for the robot to navigate
a path, in $\subscr{C}{free}$, from the initial state $x_0\in
\subscr{C}{free}$ to a goal state $\subscr{x}{G}\in
\subscr{C}{free}$. The path is denoted as an ordered set of states
$\gamma: x_0,x_1,\dots ,\subscr{x}{G}$. For the path to be considered
safe, $x_i\in C_{\text{free}},~ \forall\: i$. Continuous path curves can
also be considered.

\begin{algorithm}
\caption{$\bold{1}$ Quantum Full Path Search (q-FPS)}
\begin{algorithmic}[1] \label{alg:QAA}
\renewcommand{\algorithmicrequire}{\textbf{Input:}}
\renewcommand{\algorithmicensure}{\textbf{Output:}}
\REQUIRE $x_0,\;\subscr{x}{G}, \; n,\;$oracle function $\mathcal{X}$
\ENSURE $\gamma:\; {x_0, x_1, \dots  , \subscr{x}{G}}$
\STATE Init Database $D$
\FOR{$i = 1\text{ to }2^n$}
\STATE $D(i)\leftarrow $ random path from $x_0$ to $\subscr{x}{G}$ \label{line:alg1:database}
\ENDFOR
\STATE $m = \text{QCA}(\mathcal{X},D)$ \label{line:alg1:QCA}
\STATE Enumerate $D$ via $F:\{0,1\}^n \; \rightarrow D$ \label{line:alg1:index}
\STATE Init $n$ qubit register $\ket{z} \gets \ket{0}^{\otimes n}$
\STATE $\ket{\Psi} \gets \mathbf{W}\ket{z}$
\FOR {$i = 1\text{ to }\left\lfloor\frac{\pi}{4}\sqrt{2^n/m}\right\rfloor$} \label{line:alg1:startloop}
\STATE $\ket{\Psi} \gets Q(\mathcal{X})\ket{\Psi}$
\ENDFOR \label{line:alg1:endloop}
\STATE $\gamma \gets F($measure$(\ket{\Psi}))$ \label{line:alg1:measure}
\STATE Return $\gamma$ \label{line:alg1:return}
\end{algorithmic}
\end{algorithm}

Algorithm~\ref{alg:QAA}, the Quantum Full Path Search (q-FPS) takes
as input the initial and goal states, the desired number of quantum registers
$n$ (for database size $2^n$), and an
oracle function $\mathcal{X}$. The algorithm output is a path $\gamma\in
C_{\text{free}}$ from $x_0$ to $\subscr{x}{G}$. 

The q-FPS algorithm relies on the creation of a database of full
length path solutions on line~\ref{line:alg1:database}. In order to create a database that
is likely to contain solutions, random paths should deviate from straight line behavior. In
more complex or blocked environments, higher deviation
alongside larger database sizing $n$ can lead to a higher likelihood
of a valid solution. In Alg~\ref{alg:QAA}, on line~\ref{line:alg1:QCA}, QCA refers to the
Quantum Counting Algorithm~\cite{GB-PH-AT:98}, an extension of
Grover's algorithm and the quantum phase estimation algorithm that
estimates directly the number of solutions within the
database. Line~\ref{line:alg1:index} refers to a $1-$to$-1$ mapping
from the elements of database $D$ to states of a qubit register. It
can also be thought of as a numbering scheme. Let $\mathbf{W}$ be
the Walsh-Hadamard transform.

In the loop, from lines~\ref{line:alg1:startloop}
to~\ref{line:alg1:endloop}, we apply the QAA operator (combined with oracle $\mathcal{X}$) to the qubit multiple times to increase
the amplitude of correct database entries. The exact number of
iterations depends on the database size $2^n$ and the number of
solutions $m$ in $D$, as discussed in Section~IV. In
this application, the oracle $\mathcal{X}$ functions as a black box
indicating whether a path is obstacle collision-free. If $m$ is known, then the number of applications of $Q$ that maximize the feasible paths amplitudes is,
\begin{equation}\label{eq:imax}
    \subscr{i}{max}=\left\lfloor\frac{\pi}{4}\sqrt{2^n/m}\right\rfloor\; ;
\end{equation}
see~\cite{CF:03}. If $Q$ is
further applied, the amplitudes of correct solutions will start to
decrease, as shown in Fig.~\ref{fig:amplitudes}. Lines~\ref{line:alg1:measure} to~\ref{line:alg1:return} refer
to the process of measuring the qubit, retrieving the database
path element, and returning said path.

%%%%%%%%%%%%%%%%%%%%%%%%%%%%%%%%%%%%%%%%

This method provides us with a quantum algorithm approach to motion
planning problems with a quadratic speedup over the same method using
classical search algorithms. Speedup is effected on path
collision-checking, which is the most computationally heavy portion of
path planning. 

We illustrate the algorithm and speedup on the following example. The probabilities are
known because we simulate the quantum computer on a classical device. Consider a randomized
database in a $2-$dimensional obstacle environment using a $n = 10$
register qubit corresponding to a database with $1024$ random
paths. Let there be a total of $m = 5$ obstacle free solutions within $D$
(as measured by QCA) and $Q$ will be applied to the
equal-superposition qubit $\ket{\Psi}$ a total of $i = \lfloor 11.24
\rfloor$ times, calculated using Eq.~\eqref{eq:imax}. After $11$ iterations, the total probability of
measuring one of the $5$ correct solutions is $99.86\%$ and the total
probability of measuring one of the $1019$ incorrect solutions is
$0.14\%$, as shown in Fig.~\ref{fig:amplitudes}. Classically (on a
non-quantum computer), the expected value of oracle calls to find one
of five solutions in a database of size $N = 1024$ with $m = 5$
solutions is $(N/m)/2 = 102.4 $.

\begin{figure}[h]
	\centering
	\includegraphics[width=.5\textwidth]{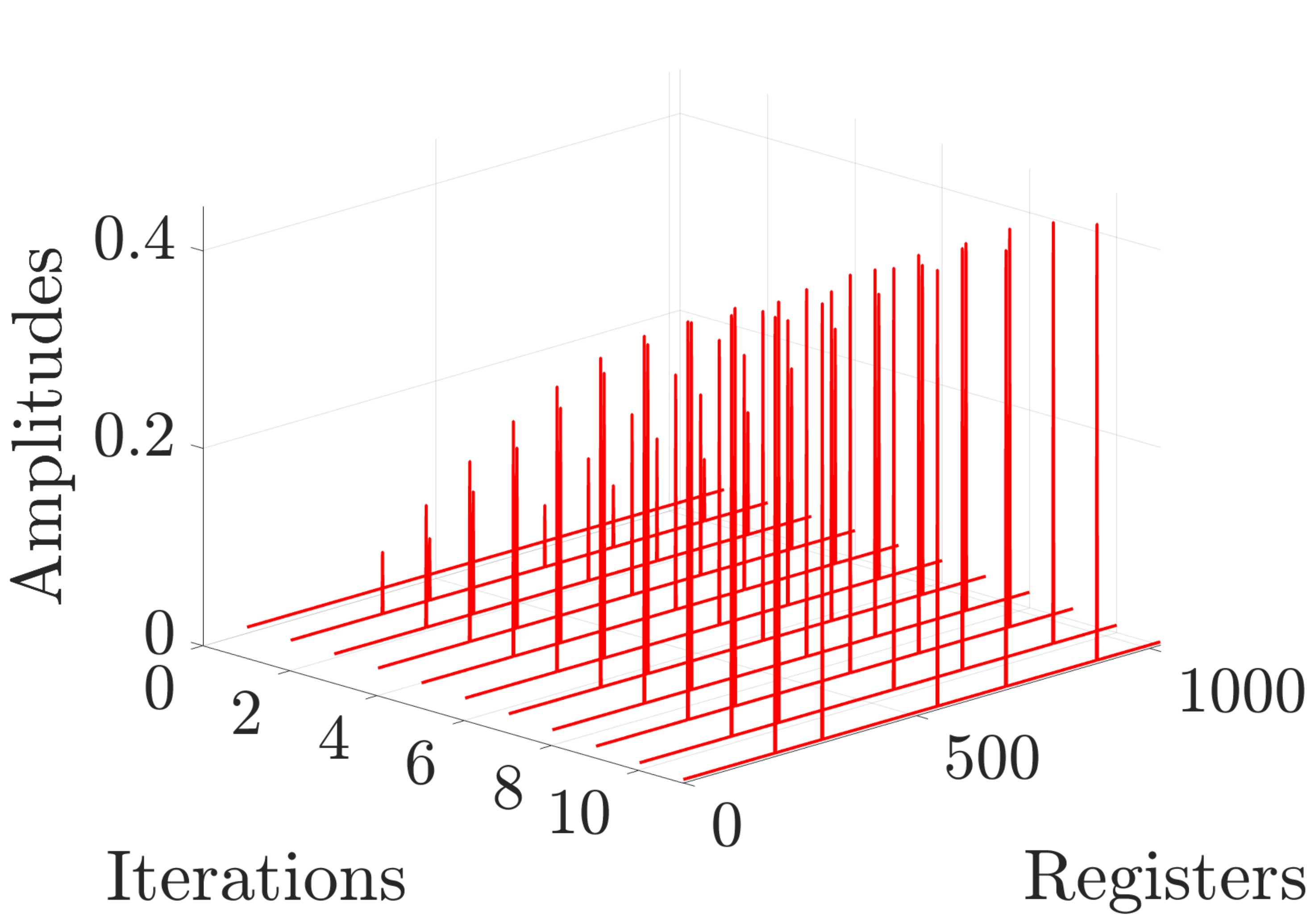}
	\caption{Effect of repeated applications of operator $Q$ on probability amplitudes of a $2^{10}$ qubit representing a database with $5$ free paths. Each register corresponds with a database element. The amplitudes of non-collision-free paths is shown as a small (non-zero) magnitude line that decreases with increased iterations. Further applications of $Q$ decrease amplitudes of free paths.}
	\label{fig:amplitudes}
\end{figure}

%%%%%%%%%%%%%%%%%%%%%%%%%%%%%%%%%%%%%%%%%%%%%%%%%%%%%%%%%%%%%%%%%%%%%%%%%%%%%%%%%%%
\section{Quantum RRT}\label{section4}
The approach of the previous section only works successfully for
obstacle-sparse environments, as randomly generated full paths are
  very unlikely to find a valid, obstacle free path when the density
  of obstacles is high. Instead, RRTs and Probabilistic
  Roadmaps (PRMs)~\cite{JJK-SML:00} are devised to produce
successful collision free-paths more quickly in cluttered
environments. In this section, we outline the q-RRT Algorithm
(Alg.~\ref{alg:q-RRT}), an RRT-like path planning algorithm, , which
is based on RRTs. The q-RRT algorithm uses QAA on a database of
individual points during tree creation to only admit reachable points
that are within the same connected component.  The main
  algorithmic differences between the q-RRT algorithm and RRT are as
  follows:
\begin{itemize}
    \item q-RRT creates databases of possible states to analyze simultaneously, rather than single states.
    \item States are assessed simultaneously for addition to the tree using quantum algorithms and measurement.
    \item A metric, $p^*$, is used to estimate the number of correct database solutions.
\end{itemize}

We analyze the algorithm performance in a $d$-dimensional finite
square (lattice) environment $C \subseteq \real^d$. The reason for
this choice is twofold: firstly, there are established tools, methods,
and theory regarding them, and secondly, they can yield sufficiently
dense and scattered environments to provide an interesting
study. Related applications include cave exploring or search and
rescue efforts in collapsed structures~\cite{BHK-MK-AME-IE:11}.

The lattice environment is shown in Fig.~\ref{fig:environment} and is defined as a square region $C =
\bigcup_{i\in \naturals} S_i \subseteq \real^d$ that is partitioned
into equal sized squares ($d=2$), cubes ($d=3$), or hypercubes ($d>3$)
$S_i,\; i \in \naturals$. Each element is either obstacle free with probability $1-r$ or occupied with probability (or concentration) $r$.
Obstacle free elements are denoted by white in our figures, and form
$\subscr{C}{free}\subseteq \real^d$, and occupied elements are denoted by black and form
$\subscr{C}{obs}\subseteq \real^d$. The characteristic length $L$ is
the ratio of the side length of $C$ to the square increment
spacing. In this section, we allow the lattice spacing to be defined
as size one and the side length of $C$ to be $L$.

Two $d$-dimensional elements are adjacent in $\real^d$ if and only if
they share a $d-1$ edge. For $d=2$, adjacency is defined for edges and
not corners. Let a connected component $Z$ be a set of
adjacent grid cells $\bigcup_{i=1} S_{i}$ such that, $
S_{i}\subseteq \subscr{C}{free},\forall i$, and any two points $x_1,x_2
\in Z$ be connected by a continuous path $\gamma\subseteq Z$.

%%%%%%%%%%%%%%%%%%%%%%%%%%%%%%%%%
\subsection{Quantum RRT Algorithm}

The q-RRT algorithm, Alg.~\ref{alg:q-RRT}, takes as inputs an initial point $x_0\in \subscr{C}{free}$,
the number of qubit registers $n$, a  number of nodes $M$, the
oracle function $\mathcal{X}$, a concentration $r$, and the
characteristic length $L$. It outputs a connected tree $T$ of $M$
reachable states (or tree nodes) from $x_0$. 
We note that, traditionally, RRTs end when a goal is found and return
a path. Instead, the goal is to construct an RRT that ends when
the given number of nodes $M$ are added successfully to the tree,
providing a type of PRM.

To add a node, q-RRT creates a size $2^n$ database $D$ of random
states-nearest parent pairs, as shown in
lines~\ref{line:alg2:startdata}
through~\ref{line:alg2:enddata}. The nearest parent in this
  context is defined using the $d$-dimensional Euclidean distance.
On lines~\ref{line:alg2:map} to~\ref{line:alg2:hadamard}, a $1-$to$-1$
database-element-to-qubit mapping is created and an equal
superposition is created across all qubit states. Recall that
$\mathbf{W}$ is the Walsh-Hadamard transform, the equal superposition
operator. On lines~\ref{line:alg2:Qstart}
through~\ref{line:alg2:Qend}, QAA is performed on $\ket{\Psi}$ a
repeated number of times (as per Eq.~\eqref{eq:imax}) based on an
estimate of number of solutions $m$ on Line~\ref{line:alg2:pstar},
where $p^*$ refers to estimates of $m/2^n$. A single database element
is added to the tree on line~\ref{line:alg2:treeadd} based upon the
quantum measurement on line~\ref{line:alg2:measure}. The oracle
function performs a reachability check (within the operator $Q$) with a local planner on
the random point $t$ from the proposed parent point $P$ to certify
that the returned tree is fully reachable. Our specific local planner
for simulation is explained in Section~\ref{subsection:localplanners},
and a more general discussion on reachability estimations can be found
in Sec.~\ref{section5}. We note that the method is defined as RRT (but can be extended to RRT* through the addition of standard rewiring after line~\ref{line:alg2:endwhile}) in order to apply quantum algorithms to the most broadly applicable sampling based motion planner.

\begin{algorithm}
\caption{$\bold{2}$ Quantum RRT (q-RRT)}
\begin{algorithmic}[1] \label{alg:q-RRT}
\renewcommand{\algorithmicrequire}{\textbf{Input:}}
\renewcommand{\algorithmicensure}{\textbf{Output:}}
\REQUIRE $x_0,\;n,\; M,\;\text{oracle } \mathcal{X},\;r,\;L$
\ENSURE Tree $T$
\STATE Init $T$ with root at $x_0$
\WHILE{size$(T) < M$}
\FOR{$i = 1\text{ to }2^n$} \label{line:alg2:startdata}
\STATE $t =$ random point
\STATE $P =$ closest parent of $t$ in $T$
\STATE $D(i) = [t;\:P]$
\ENDFOR \label{line:alg2:enddata}
\STATE Enumerate $D$ via $F:\{0,1\}^n \; \rightarrow D$ \label{line:alg2:map}
\STATE Init $n$ qubit register $\ket{z} \gets \ket{0}^{\otimes n}$
\STATE $\ket{\Psi} \gets \mathbf{W}\ket{z}$\label{line:alg2:hadamard}
\STATE $p^*_1 = p^*(r,L),\;p^*_2 = p^*\left(r,L/\sqrt{\text{size}(T)}\right)$ from Eq.~\eqref{eq:pstar}\label{line:alg2:pstar}
\FOR{$i = 1\text{ to }\left\lfloor\frac{\pi}{4}\sqrt{1/p^*_1}\right\rfloor$} \label{line:alg2:Qstart}
\STATE $\ket{\Psi} \gets Q(\mathcal{X})\ket{\Psi}$\label{line:alg2:Q}
\ENDFOR \label{line:alg2:Qend}
\STATE $[\subscr{x}{last},P] \gets F($measure$(\ket{\Psi}))$\label{line:alg2:measure}
\STATE Add $[\subscr{x}{last},P]$ to $T$\label{line:alg2:treeadd}
\ENDWHILE \label{line:alg2:endwhile}
\STATE Return $T$
\end{algorithmic}
\end{algorithm}
%%%%%%%%%%%%%%%%%%%%%%%%%%% 
\subsection{Probabilistic Completeness and Probability Results}\label{subsection:PC}
This section analyzes the effect of two sources of error that can
affect probabilistic completeness (PC) and the admission of
unreachable states to the tree in q-RRT, leading to wrong
solutions. These are imperfect oracles and the measurement
process. The following discussion and statements apply to any path
planner with similar inaccuracies. 

In what follows, we define PC with respect to q-RRT.  For
any $x_1$ and $x_2$ that belong to the same connected component $Z\subseteq\subscr{C}{free}$, it requires that: \newline A: Eventually $x_2 \in T$, for
$T$ rooted at $x_1$ with probability $1$. \newline B: $\exists$ a
good path from $x_1$ to $x_2$ in $T$ with probability $1$. 

We relax
this standard definition to just A for the following Lemma and we
address B in Thm.~\ref{theorem:addbad}. When there are no errors, A is sufficient
because every node admitted to $T$ is reachable. We show how q-RRT can meet these criteria in Lemma~\ref{thm:PC}. 
\begin{lemma}\label{thm:PC}
For every $x_1,x_2 \in Z$, where $Z\subseteq \subscr{C}{free}$ is a connected component, the output tree $T$ of q-RRT with a final check,
with root $x_1$ satisfies $\prob(x_2\in T)\to 1$, as the number of
tested samples goes to $ \infty$.
\end{lemma}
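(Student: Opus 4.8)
The plan is to adapt the classical ``chain of covering balls'' proof of probabilistic completeness for RRTs, and to graft onto it an accounting of the two error sources peculiar to q-RRT --- the random quantum measurement and the approximate solution count $p^*$. First I would use that $Z$ is a connected component of the lattice free space to join $x_1$ and $x_2$ by a continuous path $\gamma\subseteq Z$, and then build a finite chain of balls $B_0\ni x_1,\ B_1,\dots,B_k\ni x_2$, each of a fixed radius $\delta>0$ small enough that $B_j\subseteq\subscr{C}{free}$ and that every point of $B_j$ is reachable from every point of $B_{j-1}$ by the local planner (this is where the lattice geometry and the local planner of Section~\ref{subsection:localplanners} enter). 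Consecutive balls are taken to overlap, so the chain has the telescoping property used in RRT completeness arguments.

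The core is a uniform per-iteration progress bound. Fix a while-loop iteration and suppose that at its start the tree $T$ already contains a node in $B_j$ but none in $B_{j+1}$. I would show that the probability this iteration admits a node lying in $B_{j+1}$ is at least a constant $\rho>0$ independent of the iteration index and of $\mathrm{size}(T)$. This factors into three events: (i) at least one of the $2^n$ i.i.d.\ random points drawn on lines~\ref{line:alg2:startdata}--\ref{line:alg2:enddata} falls in $B_{j+1}$ and has nearest tree parent $P$ from which it is reachable --- by the geometry of the chain and the presence of a node in $B_j$, a single draw achieves this with some $q_1>0$ depending only on $\mathrm{vol}(B_{j+1})/\mathrm{vol}(C)$ and the local planner, hence the batch does so with probability $\ge q_1$; (ii) conditioned on the database $D$ containing such a ``good'' element, the state produced by $\lfloor\frac{\pi}{4}\sqrt{1/p^*_1}\rfloor$ applications of $Q(\mathcal{X})$ followed by measurement is a good element with probability at least a constant $q_2>0$ --- in the worst case the equal superposition already yields a good state with probability $m/2^n\ge 2^{-n}>0$, and QAA with the count of Eq.~\eqref{eq:imax} improves this when $p^*_1$ faithfully estimates $m/2^n$; (iii) the final reachability check confirms the measured element, so it is added to $T$. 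Taking $\rho$ to be the minimum of $q_1 q_2$ over the finitely many balls $B_1,\dots,B_k$ gives the uniform bound; note the final check is exactly what keeps $T$ free of unreachable nodes, so that criterion~A is all that needs proving.

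Finally I would conclude with a geometric-trials / conditional Borel--Cantelli argument: the events ``the frontier advances from $B_j$ to $B_{j+1}$'' occur, conditionally on the past, with probability at least $\rho$ at every iteration, so the probability that $B_k$ is still unpopulated after $N$ further iterations is at most $k(1-\rho)^N\to 0$; hence with probability one the frontier reaches $B_k$ after finitely many iterations, and since each iteration consumes $2^n$ samples this occurs almost surely as the number of tested samples tends to infinity. Therefore $\prob(x_2\in T)\to 1$, reading $x_2\in T$ --- as is standard for sampling-based planners --- as ``$T$ contains a node in the $\delta$-ball $B_k$ around $x_2$'', or, if a goal-connection step is included, as a literal connection, which then succeeds with positive probability per iteration once $B_k$ is populated.

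The main obstacle I anticipate is step (ii): making the lower bound $q_2$ on measuring a good state genuinely uniform. QAA is a rotation whose effect depends on the ratio $m/2^n$, which varies with the tree and is only approximated by the fixed $p^*_1=p^*(r,L)$, so an unlucky large over- or under-estimate could in principle rotate the good-state amplitude back down. Handling this cleanly requires either observing that the crude bound $m/2^n\ge 2^{-n}$ --- insensitive to over-rotation because it uses no QAA at all --- already suffices (it does, since $n$ is fixed), or invoking the accuracy of $p^*$ as assumed in the construction of the algorithm and quantified in Sections~\ref{section4} and~\ref{section5}. A secondary, purely classical subtlety is the nearest-parent issue in (i): one must ensure a random point in $B_{j+1}$ is reachable from \emph{whichever} tree node is actually nearest, which is handled in the usual way by shrinking $\delta$ and using the overlap of consecutive balls.
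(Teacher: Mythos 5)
Your proposal is correct in substance but takes a genuinely different route from the paper. The paper does not rebuild the covering-ball argument at all: it reduces q-RRT to the classical RRT completeness theorem of Kavraki--Lavalle by a monotonicity/coupling argument. Specifically, it observes that the output tree $\subscr{T}{q-RRT}$ contains a sub-tree $\subscr{T}{RRT-m}$ grown only from samples that are uniformly drawn, correctly tagged by the oracle, and then selected by measurement; the ``net'' sampling density (uniform density multiplied by the tagging and measurement probabilities) is still a smooth, strictly positive density over the connected component $Z$, so the classical theorem gives $\prob(x_2\in \subscr{T}{RRT-m})\to 1$, and $\prob(x_2\in \subscr{T}{RRT-m})\le\prob(x_2\in \subscr{T}{q-RRT})$ finishes the proof. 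Your version re-derives the chain-of-balls/geometric-trials machinery from scratch and in exchange makes explicit the one quantity the paper leaves implicit: a uniform positive lower bound on the per-iteration probability that a good element is actually measured. That is a real gain in transparency, because the paper's assertion that the net density is ``strictly positive'' silently assumes exactly your $q_2>0$.

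One caution on your step (ii): the fallback bound $m/2^n\ge 2^{-n}$ does not transfer directly to the post-QAA state, since the algorithm \emph{does} apply $Q$ a fixed number $j=\lfloor\frac{\pi}{4}\sqrt{1/p^*_1}\rfloor$ of times and the good-state mass after rotation is $\sin^2\left((2j+1)\arcsin\sqrt{m/2^n}\right)$, which can in principle be driven far below $2^{-n}$ by over-rotation. The clean repair is the one you gesture at but do not quite state: $j$ is fixed and $m$ ranges over the finite set $\{1,\dots,2^n\}$, so the success probability takes finitely many values, and provided none of them is exactly zero their minimum is a positive constant $q_2$ uniform over all iterations. With that observation inserted, your argument closes; without it, step (ii) as written is the only soft spot, and it is a soft spot the paper's own proof shares rather than resolves.
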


\begin{proof}
The proof follows from the probabilistic completeness of
RRTs\cite{JJK-SML:00}. The output of RRT, $\subscr{T}{RRT}$, satisfies
$\prob(x_2\in \subscr{T}{RRT})\to 1$ as the number of samples
$\to\infty$. All points in $C$ will be tested for addition to
$\subscr{T}{q-RRT}$, similar to $\subscr{T}{RRT}$, and reachable
states will be admitted to $\subscr{T}{q-RRT}$. This result holds for
the output of q-RRT, $\subscr{T}{q-RRT}$, because the sampling
distribution (and process for selecting and admitting states) and
configuration space satisfy the same conditions as the proof for RRT,
as explained next.

In database creation, q-RRT uses independent uniform sampling of
points from within $C$, where $C$ is a nonconvex bounded open
$n$-dimensional configuration space. This distribution is multiplied
by the probability of tagging each of these states as good by the
oracle process (regardless of whether they are good or bad as ground
truth), and by the probability of measuring one of these states to be
added to the tree. It holds that $\subscr{T}{q-RRT}$ contains a tree
$\subscr{T}{RRT-m}$, which is created with only correctly identified
samples (generated by a uniform distribution over $C$) that have been
measured. The latter net distribution satisfies the necessary
conditions for the RRT result, namely that it is a smooth strictly
positive probability density function over the connected component
$Z\subseteq \subscr{C}{free}$ of interest. Then, $\subscr{T}{RRT-m}$
satisfies the theorem of RRT, and $\prob(x_2 \in \subscr{T}{RRT-m})
\rightarrow 1$. Since we have $\prob (x_2 \in \subscr{T}{RRT-m}) \le
\prob (x_2 \in \subscr{T}{q-RRT}) \le 1$, the result follows.
\end{proof}

If the oracle in Alg.~\ref{alg:q-RRT} Line~\ref{line:alg2:Q} is
imperfect, reachable states may be tagged as unreachable (false negative oracle error) and vice versa, unreachable states may be tagged as reachable (false positive oracle error), as shown in Fig.~\ref{fig:falseposfalseneg}. An ``imperfect oracle'' is one that
admits any type of error. False negative errors reduce efficiency and have the potential to remove PC
properties, as good states may not be added to the tree.
False positive errors serve to increase the
likelihood that unreachable states are admitted to the tree. 
The local planner
employed does not make repeatable false negative errors, as
reachability is defined with respect to a current state, and as the
current state approaches the target state (as discussed later), if the
target state is reachable, the oracle will identify it as
such. Therefore, oracle false negative errors do not affect PC properties.

These errors are compounded with those introduced by the measurement
step on Alg.~\ref{alg:q-RRT} Line~\ref{line:alg2:measure}, which may
admit unreachable states to the tree (additional false positive measurement error), but because the measurement produces a reachable output (and not a tag like the oracle), additional false negative measurement error is not possible.

We analyze these error-measurement likelihoods next, and their impact on property~B.  First, we note that
the false positive measurement error can be mitigated through a
final deterministic oracle check before a state is added to $T$.  We call this the ``final check'', to be applied after Alg.~\ref{alg:q-RRT} Line~\ref{line:alg2:measure}, to verify that the measured node is indeed reachable with an obstacle-free path before it is added to $T$, allowing us to use the PC definition according to solely criteria~A. However, this final check comes at a cost of additional oracle calls. 

%%%%%%%%%%%%%%%
Measurement error stems from the probabilistic nature of the qubit measurement
process (Alg.~\ref{alg:q-RRT} Line~\ref{line:alg2:measure}). In
general, there is a nonzero probability that a database element marked
(by the oracle) as bad is selected for addition to $T$ (false positive measurement error). The quantum measurement process takes a
qubit and returns a deterministic state, where the returned state
probability of selection is the square of the probability amplitude
(Born's rule)~\cite{WHZ:05}. In general, the probability amplitude of
bad states after successive applications of $Q$ is nonzero,
and the following theorem provides a characterization of this probability and its impact on criterion~B.
  
\begin{theorem}\label{theorem:addbad}
Let $E$ be the event of a bad state, as tagged by the oracle
(regardless of ground truth), being added to $T$ on a particular qubit
measurement (false positive measurement error). Let database sampling be uniform over $C$ and let the database be optimally amplified.
The probability of $E$ is,
    \begin{equation}\label{eq:ThmPE}
        \prob(E)=
        1-\sin^2\left(\left(\frac{\pi}{2}\sqrt{\frac{2^n}{m}}+1\right)\arcsin\left(\sqrt{\frac{m}{2^n}}\right)\right),
    \end{equation}
where $2^n$ is the current database size and $m$ is the current number
of solutions within the database. Eq.~\eqref{eq:ThmPE} is the minimum value that is achieved
when $Q$ is applied exactly according to
Eq.~\eqref{eq:imaxexact}\footnote{Functionally, Eq.~\eqref{eq:ThmPE}
will be modified by the fact that $Q$ is applied an integer number of
times.}.
As the number of nodes $M\to\infty$, $\prob(E)$ monotonically increases to $\lim_{M\to\infty}\prob(E)\equiv\prob(\subscr{E}{lim})$,
\begin{equation}\label{eq:ThmPEupper} 
    \prob(\subscr{E}{lim})= 
    1-\sin^2\left(\left(\frac{\pi}{2}\sqrt{1/r}+1\right) \arcsin\left(\sqrt{r}\right)\right),
\end{equation}
where $r$ is the environment concentration.
Lastly, let $F$ be the
event that at least one bad state exists within $T$. When $M$ nodes
are in $T$, an upper bound on the probability of $F$ is,
\begin{equation}\label{eq:ThmAtLeastOnce}
    \prob(F)\leq
    1-(1- \prob(\subscr{E}{lim}))^M,
\end{equation}
and an upper bound on the
probability that at least one bad state is part of a given
path~$\gamma$,
\begin{equation}\label{eq:ThmAtLeastOnceGamma}
    \prob(F_\gamma) \leq
    1-(1- \prob(\subscr{E}{lim}))^{|\gamma|},
\end{equation}
where $|\gamma|$ is the number of nodes in $\gamma$.
\end{theorem}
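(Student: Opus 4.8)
The plan is to handle the four displayed estimates in order, each building on the one before.

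\emph{Formula for $\prob(E)$.} I would start from the standard QAA amplitude recursion \cite{GB-PH-MM-AT:02}: applying $Q(\mathcal{X})$ $j$ times to the equal superposition $\mathbf{W}\ket{0}^{\otimes n}$ over $N=2^n$ basis states, of which $m$ are tagged good by the oracle, produces the state $\sin((2j+1)\theta)\ket{\mathrm{good}}+\cos((2j+1)\theta)\ket{\mathrm{bad}}$ with $\theta=\arcsin(\sqrt{m/N})$, so one measurement returns a good (reachable-tagged) pair with probability $\sin^2((2j+1)\theta)$. Since $E$ is precisely the complementary event, $\prob(E)=\cos^2((2j+1)\theta)$. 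Plugging in the ideal (un-floored) iteration count of Eq.~\eqref{eq:imaxexact}, for which $2j+1=\tfrac{\pi}{2}\sqrt{2^n/m}+1$, gives Eq.~\eqref{eq:ThmPE}. The minimality claim is then the statement that this $j$ is (essentially) the maximizer of $\sin^2((2j+1)\theta)$ on its first lobe, so moving away from it --- in particular the integer flooring flagged in the footnote --- can only decrease the good-state probability; this is checked by differentiating $j\mapsto \cos^2((2j+1)\theta)$.

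\emph{Limiting value and monotonicity.} The new ingredient here is the behaviour of the good fraction $m/2^n$ as the tree fills its connected component. As $\mathrm{size}(T)=M$ grows, the nearest-parent distance of a fresh uniform sample contracts on the order of $L/\sqrt{M}$, so, via the construction of $p^*$ in Eq.~\eqref{eq:pstar}, the fraction $m/2^n$ increases monotonically in $M$ with $m/2^n\to r$. Substituting $m/2^n=r$ into Eq.~\eqref{eq:ThmPE} yields Eq.~\eqref{eq:ThmPEupper}. Monotone growth of $\prob(E)$ in $M$ then follows by composing the monotone growth of $m/2^n$ with monotonicity of $p\mapsto 1-\sin^2\!\big((\tfrac{\pi}{2}\sqrt{1/p}+1)\arcsin\sqrt{p}\big)$ on $(0,r]$, and the same composition gives the uniform bound $\prob(E)\le\prob(\subscr{E}{lim})$ for every finite $M$.

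\emph{Bounds on $\prob(F)$ and $\prob(F_\gamma)$.} For $F$ I would write $\prob(F)=1-\prob(\text{no node of }T\text{ is bad})$. Conditioned on the history through the $k$-th \textbf{while}-iteration (the current tree, its database, and the amplification applied), the $k$-th quantum measurement is independent of the past and returns a bad-tagged pair with probability $\prob(E)_k\le\prob(\subscr{E}{lim})$ by the previous step; hence, by the tower property, $\prob(\text{no bad node})=\prod_{k=1}^{M}\prob(\text{step }k\text{ good}\mid\text{history}_k)\ge(1-\prob(\subscr{E}{lim}))^{M}$, which is Eq.~\eqref{eq:ThmAtLeastOnce}. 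For $F_\gamma$, the point is that a path $\gamma$ in $T$ is fixed by the tree topology, not by which edges turn out to be unreachable, so conditioning on $\gamma$ keeps each of its $|\gamma|$ nodes' error probabilities at most $\prob(\subscr{E}{lim})$; the identical product estimate over those $|\gamma|$ nodes gives Eq.~\eqref{eq:ThmAtLeastOnceGamma}.

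\emph{Main obstacle.} The QAA substitution and the product/complement estimates are routine once set up; the substantive step is the middle one --- establishing rigorously that $m/2^n$ is monotone in $M$ and converges to exactly $r$ --- since that is where the random-lattice nearest-parent geometry and the precise definition of $p^*$ in Eq.~\eqref{eq:pstar} actually enter. A secondary subtlety is the conditional-independence argument behind Eq.~\eqref{eq:ThmAtLeastOnceGamma}: one has to ensure that choosing $\gamma$ after the tree is built does not bias the measurement outcomes along it, which is exactly what the topology-versus-reachability observation secures.
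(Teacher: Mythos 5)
Your proposal follows essentially the same route as the paper's proof: substituting the ideal iteration count into the QAA success probability $\sin^2((2i_{\textup{max}}+1)\theta)$ with $\theta=\arcsin(\sqrt{m/2^n})$, arguing that $m/2^n$ grows monotonically to $r$ as the tree saturates $\subscr{C}{free}$, and then applying the complement-product ("at least one of $M$ events") bound with $\prob(\subscr{E}{lim})$ in place of each per-node probability. Your added care about the tower-property conditioning for the product bound and the explicit monotonicity check are refinements of, not departures from, the paper's argument.
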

We remark that there is no way of finding lower bounds similar to Eq.~\eqref{eq:ThmAtLeastOnce} and Eq.~\eqref{eq:ThmAtLeastOnceGamma}, as the expected lower bound value of Eq.~\eqref{eq:ThmPE} depends on the local planner. In this case, Eq.~\eqref{eq:ThmAtLeastOnce} and Eq.~\eqref{eq:ThmAtLeastOnceGamma} form expected worst-case estimates to tree and path errors, respectively, when using q-RRT.
\begin{proof}
    First, we note that the optimal number of applications of $Q$ to
    maximize the chance of a good measurement is, 
    \begin{equation}\label{eq:imaxexact}
        \subscr{i}{max}=\frac{\pi}{4}\sqrt{\frac{2^n}{m}}\;,
    \end{equation}
    as given in~\cite{GB-PH-MM-AT:02}. We further note that, after
    $\subscr{i}{max}$ iterations, the probability of measuring a good
    state is,
    \begin{equation}\label{eq:Ec}
        \prob(E^c) = \sin^2((2\subscr{i}{max}+1)\theta),
    \end{equation}
    where $\theta$ is defined such that $\sin^2(\theta) =
    \frac{m}{2^n}$~\cite{GB-PH-MM-AT:02}, and where $\frac{m}{2^n}$ is
    the success probability of the database.
    Thm.~\ref{theorem:addbad} Eq.~\eqref{eq:ThmPE} follows via
    substitution. 
    For local
    planners testing reachability, as $M\to\infty$, in the maximal
    case the entirety of $\subscr{C}{free}$ becomes locally reachable.
    Therefore, the ratio of correct database solutions $2^n/m$
    approaches the environment concentration $r$, yielding
    Thm.~\ref{theorem:addbad} Eq.~\eqref{eq:ThmPEupper}. 
    
    Lastly, we observe that $\prob(E)$ is upper bounded by
    Eq.~\eqref{eq:ThmPEupper} and $\prob(E)$ is strictly increasing as
    a function of $m$, over our entire effective domain of $m/2^n= (0.04\;\;0.75)$. If we
    assume the upper bound for each node in the tree,
    Thm.~\ref{theorem:addbad} Eq.~\eqref{eq:ThmAtLeastOnce} follows by
    substituting Eq.~\eqref{eq:ThmPEupper} into the probability formula of at
    least one $\prob(E)$ occurring over $M$
    events. Eq.~\eqref{eq:ThmAtLeastOnce} is an upper bound over the
    number of nodes $M$, as it is found by assuming an upper bound
    value occurs in every case. 
    This is modified by replacing the
    power $M$ for $|\gamma|$ for the case of a path $\gamma$ with node
    length $|\gamma|$, yielding Eq.~\eqref{eq:ThmAtLeastOnceGamma}.
\end{proof}

We note that for databases with less than
$75\%$ solutions, $0<m<0.75*2^n$, $\prob(E)$ is strictly increasing as the fraction of
solutions in the database $m/2^n$ increases. It is also well
approximated by a linear function, $\hat{\prob}(E) = 1.251
\frac{m}{2^n} - 0.0159$, achieved with linear least squares on $m/2^n
= (0.04\;\;0.75)$ with coefficient of determination $R^2>0.999$. With
a local planner testing reachability, in general, as $M\uparrow$, the
number of database solutions $m\uparrow$. We defer this discussion to
Section~\ref{subsection:localplanners}. 

The above quantum measurement error analysis is modified in Prop.~\ref{prop:fpfnerror} to additionally account for false positive and
false negative errors by the oracle.

\begin{proposition}\label{prop:fpfnerror}
Let $G$ be the event that a good state, with respect to ground truth
(rather than as tagged by the oracle), is measured for addition to the
tree. Let the probability of a state marked incorrectly as good be
given by $q\in[0,1]$ (false positive), and let the probability of a state marked incorrectly as
bad be given by $v\in[0,1]$ (false negative). Let the database be optimally amplified. Then, the event $G$ has probability,
\begin{equation}\label{eq:probEbar}
    \prob(G) =
    \left(-1+q+v\right)\prob(E)+1-q,
\end{equation}
where $\prob(E)$ is given by Eq.~\eqref{eq:ThmPE}.
As the number of nodes $M\to \infty$, the probability
of event $G$, denoted as $\lim_{M\to\infty}\prob(G)\equiv\prob(\subscr{G}{lim})$, is given by,
\begin{equation}\label{eq:limG}
     \prob(\subscr{G}{lim})=
    \left(-1+q+v\right)\prob(\subscr{E}{lim})+1-q,
\end{equation}
where $\prob(\subscr{E}{lim})$ is given by Eq.~\eqref{eq:ThmPEupper}
and where the maximum value is again achieved when the database is
optimally amplified. Let $F^*$ be the event that at least one bad
state exists within $T$, when oracle errors are considered. When $M$
nodes are in $T$, an upper bound on the probability of $F^*$ is,
\begin{equation}\label{eq:Fstar1}
    \prob(F^*) \leq 1-( \prob(\subscr{G}{lim}))^M,
\end{equation}
and an upper bound on the
probability that at least one bad state is part of a given path
$\gamma$ is given by,
\begin{equation}\label{eq:Fstar2}
    \prob(F^*_\gamma)\leq 1-(\prob(\subscr{G}{lim}))^{|\gamma|},
\end{equation}
where $|\gamma|$ is the number of nodes in $\gamma$.
\end{proposition}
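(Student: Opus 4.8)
The plan is to obtain Eq.~\eqref{eq:probEbar} by conditioning the measurement outcome on the oracle's tag, and then to propagate this through the $M\to\infty$ limit and an independence argument exactly as in the proof of Thm.~\ref{theorem:addbad}.

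First I would fix one pass through lines~\ref{line:alg2:Qstart}--\ref{line:alg2:measure} of Alg.~\ref{alg:q-RRT} and condition on whether the measured element was tagged good or bad by $\mathcal{X}$. With the database optimally amplified, Thm.~\ref{theorem:addbad} gives that this element is oracle-tagged-bad with probability $\prob(E)$ from Eq.~\eqref{eq:ThmPE} and oracle-tagged-good with the complementary probability $1-\prob(E)$. The structural point is that $Q(\mathcal{X})$ amplifies strictly according to the oracle tag and never according to ground truth, so conditioned on the tag the ground-truth status of the measured element is governed only by the oracle error model: by the stated meaning of the error rates, a state tagged good is actually bad (a false positive) with probability $q$, hence actually good with probability $1-q$, whereas a state tagged bad is actually good (a false negative) with probability $v$. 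The law of total probability then yields $\prob(G)=(1-\prob(E))(1-q)+\prob(E)\,v$; expanding and collecting the coefficient of $\prob(E)$ gives Eq.~\eqref{eq:probEbar}.

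Next I would let $M\to\infty$. By Thm.~\ref{theorem:addbad}, $\prob(E)$ is monotone increasing in $M$ with limit $\prob(\subscr{E}{lim})$ from Eq.~\eqref{eq:ThmPEupper}; substituting this into Eq.~\eqref{eq:probEbar} produces Eq.~\eqref{eq:limG}. For the optimality remark, Eq.~\eqref{eq:ThmPE} is already the per-step minimum of $\prob(E)$, and in the regime $q+v\le 1$ the coefficient $-1+q+v$ is nonpositive, so $\prob(G)$ is a nonincreasing affine function of $\prob(E)$; hence $\prob(G)$ is maximal precisely when $\prob(E)$ is minimal, i.e.\ under optimal amplification, and moreover $\prob(G)\ge\prob(\subscr{G}{lim})$ at every node addition.

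Finally, for Eqs.~\eqref{eq:Fstar1}--\eqref{eq:Fstar2} I would model the $M$ node additions as independent events, as in the proof of Thm.~\ref{theorem:addbad}, each admitting a truly good state with probability $\prob(G)\ge\prob(\subscr{G}{lim})$. Then the probability that all $M$ admitted states are truly good is at least $(\prob(\subscr{G}{lim}))^M$, and its complement gives $\prob(F^*)\le 1-(\prob(\subscr{G}{lim}))^M$; restricting the same product to the $|\gamma|$ nodes lying on a fixed path $\gamma$ gives Eq.~\eqref{eq:Fstar2}. I expect the only genuinely delicate step to be the conditional-independence claim of the second paragraph: one must justify that the amplification never correlates ``the measured tag'' with ``that tag being wrong,'' so that $q$ and $v$ may be applied after conditioning on the tag; once that is granted, everything reduces to the total-probability and worst-case independent-trial manipulations already used for Thm.~\ref{theorem:addbad}. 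I would also flag that the bounds~\eqref{eq:Fstar1}--\eqref{eq:Fstar2} hold in the practically relevant regime $q+v\le 1$; when $q+v>1$ the monotonicity above reverses, and so does the direction of the inequality.
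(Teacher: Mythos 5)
Your proposal is correct and follows essentially the same route as the paper: the paper likewise obtains $\prob(\bar G)=\prob(E)+\prob(E^c)q-\prob(E)v$ by splitting on the oracle tag and applying the error rates, which is exactly your total-probability computation $\prob(G)=(1-\prob(E))(1-q)+\prob(E)v$ in complemented form, and then substitutes the limit and reuses the independent-trial bounds from Thm.~\ref{theorem:addbad}. Your added remarks on the conditional-independence of tag and tag-correctness and on the regime $q+v\le 1$ are sensible clarifications the paper leaves implicit, but they do not change the argument.
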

\begin{proof}
    The proof stems from modifications made to the statement of
    Eq.~\eqref{eq:ThmPE} to move from
    measurement probability with respect to the oracle to measurement
    probability with respect to ground truth. To factor in both types
    of error, $\prob(E^c) q$ (fraction of false positive error, as given in Eq.~\eqref{eq:Ec}) must be
    added to $\prob(E)$ from Eq.~\eqref{eq:ThmPE}, and $\prob(E) v$ (fraction of false negative error) must be
    subtracted from $\prob(E)$, as shown in Fig.~\ref{fig:falseposfalseneg}. This yields,
    \begin{equation}\label{eq:propproof}
        \prob(\bar{G}) = \prob(E)+\prob(E^c)q-\prob(E)v,
    \end{equation}
    where $\bar{G}$ denotes the complement of event $G$. Eq.~\eqref{eq:propproof} can be simplified, and the complement taken, to give Eq.~\eqref{eq:probEbar}. Eq.~\eqref{eq:limG} is found by taking Eq.~\eqref{eq:probEbar} and substituting $\prob(G)$ and $\prob(E)$ with $\prob(\subscr{G}{lim})$ and $\prob(\subscr{E}{lim})$, respectively. Eq.~\eqref{eq:Fstar1} and Eq.~\eqref{eq:Fstar2} are found with the same process as Eq.~\eqref{eq:ThmAtLeastOnce} and Eq.~\eqref{eq:ThmAtLeastOnceGamma} with the complement of event $G$.
\end{proof}

\begin{figure}[h]
	\centering
	\includegraphics[width=.5\textwidth]{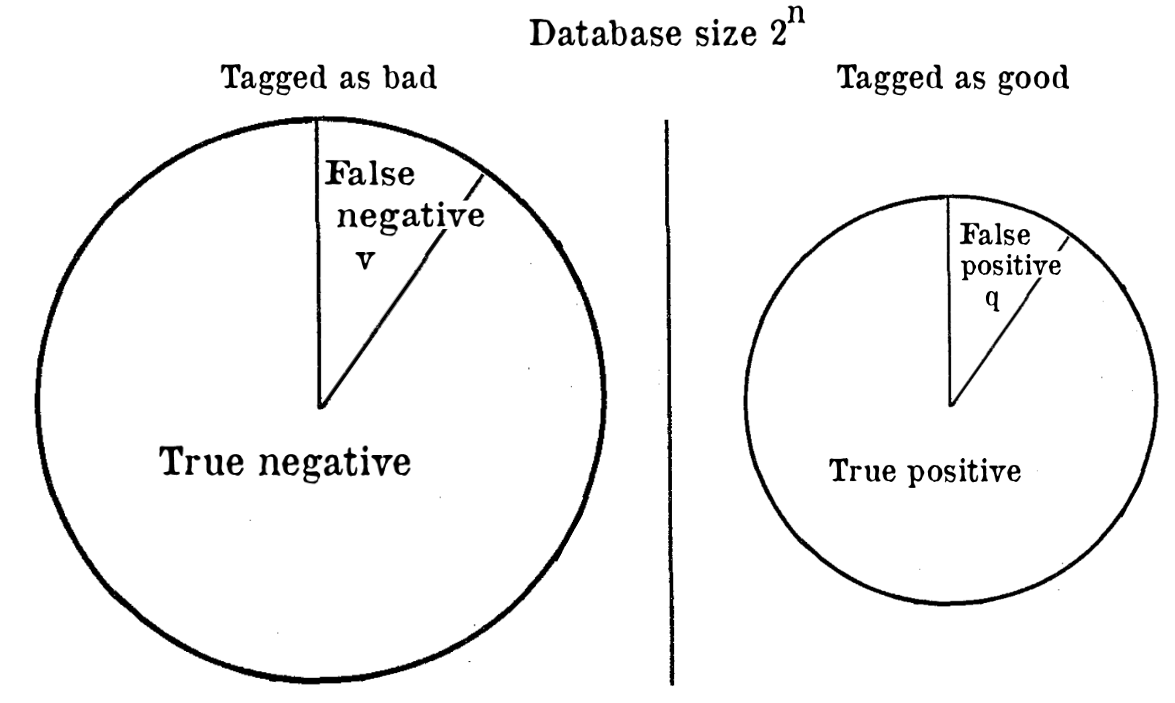}
	\caption{A visual depiction of the false positive and false negative regions of the good and bad tags by an oracle.}
	\label{fig:falseposfalseneg}
\end{figure}

A tree with as many good states as possible is achieved with the lower bounds of error in Thm.~\ref{theorem:addbad}. Attaining this bound requires applying QAA an optimal number of times, which is what we estimate next.

%%%%%%%%%%%%%%%%%%%%%%%%%%%
\subsection{Estimate of the Number of Correct Solutions}\label{subsection:estimate}

 In this section, we explore methods for estimating the number of
 tree-admittable states out of a database of uniformly random points
 inside of a $2$-dimensional periodic finite square lattice of
 characteristic length $L$ and concentration $r$. This estimation will
 guide the algorithm in applying QAA the optimal number of times (Eq.~\eqref{eq:imax}). Let the function,
\begin{equation}
    p(x_1,x_2) = \begin{cases} 1, & \text{if } x_1,x_2 \in
       Z, \\ 0, &
      \text{otherwise},
    \end{cases}
\end{equation}
represent connectivity, for a connected component $Z\subseteq \subscr{C}{free}$. Initially, we are concerned with whether or
not the two states are within the same connected component. In Section~\ref{subsection:localplanners}, we discuss local planners and reachability. We estimate the average connectivity $p^*$ of
$2$ random points within the square lattice as an estimator of the
number of correct solutions to the database $D$,
\begin{equation}\label{eq:connecprob}
    p^* = \mathbb{E}_{\pi(x_1,x_2)}(p(x_1,x_2)),
\end{equation}
where $\pi(x_1,x_2) = \mathbb{U}(\subscr{C}{free})\times \mathbb{U}(C)$.

Several results from Percolation Theory provide insight as to average
connectivity of finite square
lattices~\cite{KC:02},~\cite{RZ:92},~\cite{NJ:99}. The
work~\cite{SM-RZ:16} uses results from~\cite{MFS-JWE:64} to calculate
and estimate wrapping probabilities of $2$D square lattices. Wrapping
probabilities refer to the probability that there exists a giant
connected component from one edge of the $2$D square lattice to the
opposite edge. In the context of q-RRT, since each parent is assumed
to be in $\subscr{C}{free}$, wrapping probabilities, as presented
in~\cite{SM-RZ:16}, cannot be directly used. Additionally, our desired
estimation is with respect to individual points and not a set of points representing an
edge, as in~\cite{SM-RZ:16}.

\begin{figure}[h]
	\centering
	\includegraphics[width=.5\textwidth]{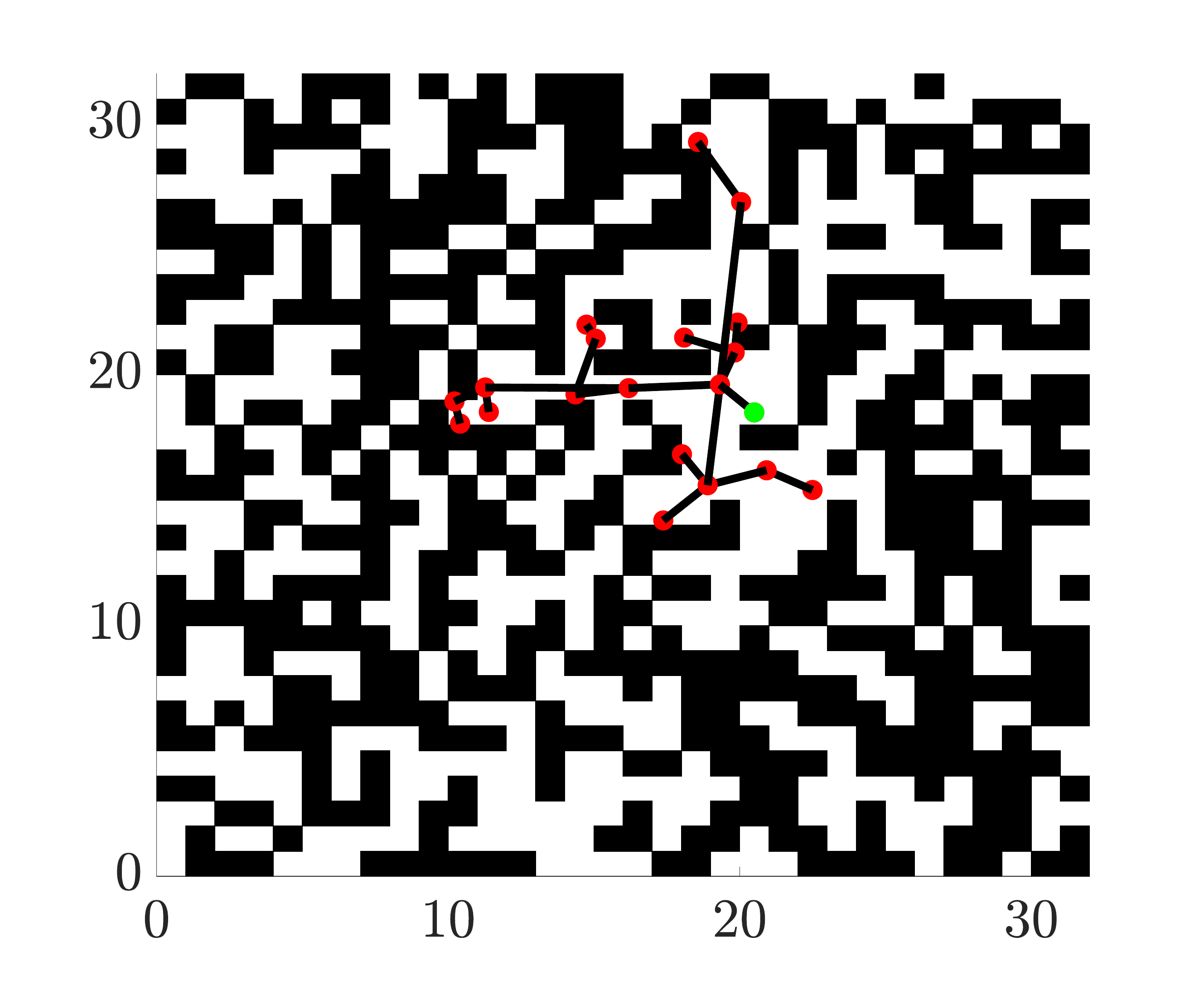}
	\caption{A sample random square lattice with $L = 32$ and $r = 0.5$ spanned by a $20$ node tree with $x_0$ in green.}
	\label{fig:environment}
\end{figure}

We calculate the connection probability, Eq.~\eqref{eq:connecprob},
from a state $x_1\in \subscr{C}{free}$ to a random state $x_2\in C$. This
reflects an estimate for correct solutions to the database in the case
where all nodes of the tree reduce to the root $x_0$. In the next
section, we evaluate the case where trees that are maximally spread in
the environment. We fit a model to numerical simulations over
concentration $r$ and characteristic length $L$ to estimate connection
probability $p^*$,

\begin{equation}\label{eq:pstar}
    p^*(r,L) = \frac{f}{1+e^{-a(L-b)(r-c)}}+dL^{-2},
\end{equation}
with $a = -0.1597, \; b = -54.59,\;c = 0.3212,\;d = 1.195,\;f =
0.9542$, found with nonlinear least squares and with an
ordinary coefficient of determination $R^2 = 0.9957$;
see~\cite{NJDN:91}. The model was chosen as a logistic
function due to observations on matching function data in~\cite[Fig.~5]{SM-RZ:16}. While $(p^*,\;r)$ slices of data exhibit a
logistic relation, it is not independent of $L$ based on inspection of
level sets in $L$, which is therefore modeled as a scaling parameter
of the logistic function. It is observed that $(p^*,\;L)$ slices
exhibit a negative nonlinear relation which is modeled with a
quadratic.

Each point $\circ$ in Fig.~\ref{fig:infillsigmoid} in the parameter
space $(r,L)$ represents the average of $1,000$ random connectivity
tests over $25$ different random lattices each, totaling $25,000$
points. In aggregate, data was collected over $209$ parameter-space
points, totaling $5.225$ million data points. The total dataset was
condensed, and the model was trained on averages because we seek to
estimate averages. Since the number of data points ($209$ averages) is
large compared to the number of parameters ($5$), we are not concerned
with over-fitting and therefore report the coefficient of
determination $R^2$ and do not split the data into training and
validation sets. We refer the reader to
  Section~\ref{sec:validation-a} for an evaluation of this metric.

\begin{figure}[h]
	\centering
	\includegraphics[width=.5\textwidth]{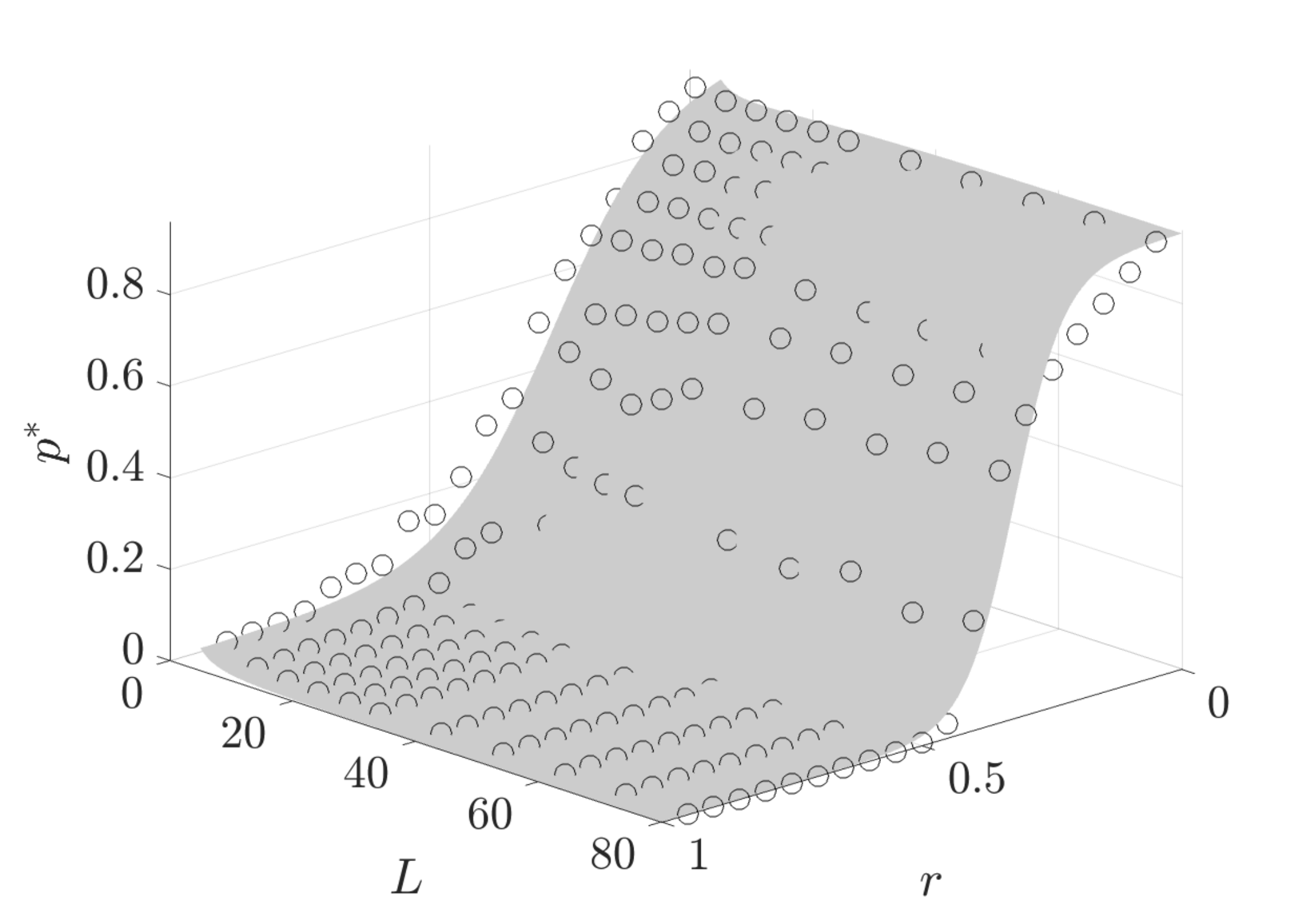}
	\caption{Numerically generated data points ($\circ$)
          estimating $p^*$ (free-random point connectivity) as a
          function of concentration $r$ and length
          $L$. Eq.~\eqref{eq:pstar} is depicted as the gray surface,
          with a coefficient of determination $R^2 = 0.9957$.}
	\label{fig:infillsigmoid}
\end{figure}
%%%%%%%%%%%%%%%%%%
\subsection{Local Planners and Upper Bound Limit}\label{subsection:localplanners}
The choice of local planner affects the accuracy and, therefore, the
relevance of Section~\ref{subsection:estimate}. Previously, we sought to
add points to the tree that are connectable to the tree, i.e. within
the same connected component, with no restrictions on the connecting
path. If we instead desire the local planner admit \emph{reachable}
points to the tree (which account for some  dynamics), the model
of Fig.~\ref{fig:infillsigmoid} can be tweaked to yield a second
estimate. We also note that considering dynamic models in the
estimation in Eq.~\eqref{eq:pstar} leads to an expansion of the
parameter space in an unmanageable way, so the model estimates
connectivity sans dynamics.

Given $x_1 \in \subscr{C}{free}$, we define the reachable set from $x_1$ as the states
$x_2 \in \subscr{C}{free}$ that can be
connected to $x_1$ by a dynamic, obstacle-free path generated by a
predefined type of control. We choose this type of restricted
reachability\footnote{More generally, a reachable state from $x_1$ is
$x_2$ for which there exists a control $u(t)$ that connects these
states by a dynamic path.~\cite{HLT-AAS-MH:12}}, so we can factor in
system dynamics and remain according to~\cite{LEK-PS-JCL-MHO:96}, who
note that it is preferable to use a very fast local planner even if it
is not too powerful. The oracle marks dynamic paths as not reachable if they are not
obstacle-free, as we are more concerned with testing many solutions
quickly rather than every solution rigorously, even if it may
reachable be with a modified controller or intermediate references.

Next, we provide an upper bound characterization to
reachability from a tree of $M$ nodes in free space by considering the
case where the tree is maximally spread.  This case gives the minimum
effective characteristic length because new samples are connected to
the nearest node, and for a maximally spread tree, the Euclidean distance of
that node to the nearest one in the tree should be the smallest. The
minimum characteristic length maximizes reachability, maximizing the proportion of the database marked as correct,
which enables us to lower bound the number of applications of $Q$ as
per Eq.~\eqref{eq:imaxexact}. An upper bound $p_2^*$
on the average reachability to a set of nodes in random square
lattices is defined in Thm.~\ref{theorem:p2}.

\begin{theorem} \label{theorem:p2}
For a random square lattice $C$ characterized by length $L$, with
concentration $r$ and an arbitrary set $T$ of $M$ nodes in free space,
an upper bound $p_2^*(r,L)$ with $ x_1 \in T \subseteq
\subscr{C}{free}$ and $x_2 \sim \mathbb{U}(C)$ is given by
Eq.~\eqref{eq:pstar} with characteristic length
$L^*=\frac{3L}{\sqrt{M}}$. $p_2^*$ is the absolute upper bound of the
number of correct database solutions, which is related to the number
of times to apply QAA by Eq.~\eqref{eq:imax}.
\end{theorem}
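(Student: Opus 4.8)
The plan is to reduce the $M$-node reachability question to the single-source model $p^*(r,\cdot)$ of Eq.~\eqref{eq:pstar} evaluated at a reduced characteristic length, and then to identify the smallest such length attainable by any $M$-node tree in free space. First I would note that the fraction of correct entries produced by the database step of Alg.~\ref{alg:q-RRT} is precisely the probability that a point $x_2\sim\mathbb{U}(C)$ is reachable from $x_1$, where $x_1$ is the nearest node of $T$ to $x_2$; this is the quantity $p_2^*$ to be bounded, and it is what drives $\subscr{i}{max}$ through $m/2^n$ in Eq.~\eqref{eq:imax}. Since the local planner only needs obstacle-free room in a box whose side scales with $\dist(x_1,x_2)$, the reachability of that pair is governed by the same random-lattice statistics as the whole environment restricted to that box, i.e.\ by $p^*(r,\cdot)$ with characteristic length equal to the number of lattice cells the box spans.

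Next I would show that a maximally spread tree is the worst case. On one hand it minimizes the effective length: every sample attaches to its nearest node, and among all $M$-node configurations a maximally spread one minimizes the typical nearest-node distance. On the other hand, $p^*(r,L)$ of Eq.~\eqref{eq:pstar} is monotonically decreasing in $L$ over the regime of interest --- the $dL^{-2}$ term is decreasing, and for concentrations $r\ge c$ (and, more to the point, over the effective domain $m/2^n\in(0.04,0.75)$ identified after Thm.~\ref{theorem:addbad}) the logistic term is decreasing as well, since with the fitted sign of $a$ the exponent grows in $L$ --- so a smaller effective length yields a larger $p^*$. Combining the two, the maximally spread tree produces the largest attainable value of the correct-solution fraction, which is therefore a valid upper bound $p_2^*$ for any $M$-node tree, and substituting it into Eq.~\eqref{eq:imaxexact} lower-bounds the optimal number of applications of $Q$.

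It then remains to evaluate that minimal effective length. Laying the $M$ nodes on a regular $\sqrt{M}\times\sqrt{M}$ grid over the side-$L$ lattice gives spacing $L/\sqrt{M}$, and a Voronoi/area (pigeonhole) argument shows this is optimal up to a constant: the mean Voronoi cell area is $L^2/M$, so no configuration can force every nearest-node distance below order $L/\sqrt{M}$. A random $x_2$ lies in some cell, and the segment joining it to its nearest grid node is contained in a box of side at most a small constant number of grid spacings once one allows for $x_2$ lying near a cell boundary (nearest node in an adjacent cell) and for the excursions a dynamic local planner may require; bounding that constant conservatively by $3$ gives $L^*=3L/\sqrt{M}$. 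Hence $p_2^*(r,L)=p^*(r,3L/\sqrt{M})$, which is the absolute upper bound on the number of correct database solutions and thus on $m/2^n$ in Eq.~\eqref{eq:imax}.

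The hard part will be making the reduction in the first paragraph rigorous: $p^*$ is an empirical fit to aggregate connectivity over full lattices, so the assertion that local reachability over a sub-region of side $L^*$ obeys the same law with characteristic length $L^*$ is an approximation --- it ignores the sub-region's boundary and its correlation with the rest of $C$ --- rather than a theorem, and the constant $3$ is a deliberately conservative geometric estimate rather than a sharp bound (one should at least check that it dominates the worst-case ratio of path-box side to mean nearest-node spacing). I would therefore present the result as a principled upper-bounding heuristic justified by the monotonicity and pigeonhole arguments above, and lean on the numerical validation of Eq.~\eqref{eq:pstar} referenced earlier for its quantitative accuracy.
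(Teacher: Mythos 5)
Your argument follows essentially the same route as the paper's proof: take a maximally spread tree, observe that it minimizes the typical nearest-node distance and hence (by monotonicity of $p^*$ in $L$) maximizes the fraction of correct database entries, identify the Voronoi/CVT spacing $L/\sqrt{M}$ for area $L^2$ split into $M$ regions, inflate by a factor of $3$, and substitute $L^*=3L/\sqrt{M}$ into Eq.~\eqref{eq:pstar}. The one substantive divergence is the provenance of the factor $3$: the paper places $\widetilde{M}\geq M$ CVT generators, concedes that some may fall in $\subscr{C}{obs}$, and takes $3L/\sqrt{\widetilde{M}}\leq 3L/\sqrt{M}$ as the distance to the nearest \emph{feasible} generator, whereas you attribute the $3$ to adjacent-cell boundary effects and local-planner excursions. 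Both are conservative geometric estimates rather than sharp constants, so neither derivation is more rigorous than the other; your version has the added merit of explicitly verifying the sign conditions under which Eq.~\eqref{eq:pstar} is decreasing in $L$ (the step that turns ``smaller effective length'' into ``larger $p^*$,'' which the paper leaves implicit) and of flagging that transplanting the whole-lattice empirical fit $p^*$ to an effective sub-region of side $L^*$ is itself a modeling approximation --- caveats that accurately reflect the heuristic character of the paper's own argument.
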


\begin{proof}
The proof follows by considering the best case of a
  maximally spread tree $T$ of $\widetilde{M}$
nodes (and $M$ feasible nodes) within lattice $C$. A tree $T$ with nodes placed according to a
centroidal Voronoi tessellation (CVT)~\cite{QD-VF-MG:99}
of $C$ with $\widetilde{M}$ nodes
  and regions, is one that minimizes the expected distance of every
  node in $C$ to the closest generator. Assume that
  $\widetilde{M}$ is sufficiently large so that there $M$ feasible
  nodes in $\subscr{C}{free}$. A random point will attempt to connect
with the closest parent. Each existing node, when placed according to
a CVT in a convex region, creates a region of connection characterized
(in 2D) by length $\frac{L}{\sqrt{\widetilde{M}}}$
for a $C$ of area $L^2$ with $\widetilde{M}$ regions.  A
CVT, by definition, creates Voronoi regions of connectivity of
expected minimal characteristic length. If a certain node turns out to be infeasible,
the distance of a point to the nearest feasible generator is $\frac{3L}{\sqrt{\widetilde{M}}}$, which can be upper bounded by $\frac{3L}{\sqrt{M }} $. This minimal characteristic length yields a maximum connectivity estimate by substituting $L^*=\frac{3L}{\sqrt{M }} $ for $L$ into Eq.~\eqref{eq:pstar}.
\end{proof}

This is similar to the noted result
in~\cite{LEK-PS-JCL-MHO:96} regarding the restriction of new test nodes to sufficiently
close existing nodes in the tree to maximize the connection
likelihood. In the q-RRT Alg.~\ref{alg:q-RRT}, $p_2^*$ serves to lower bound the number of times QAA must be applied to the database qubit. The intuition behind Thm.~\ref{theorem:p2} is that as the tree grows in number of nodes, it is easier to prove reachability to the tree. The bounding case is when the tree is maximally spread within $C$, as given by a CVT. In that case, the characteristic length can be thought of as $\frac{L}{\sqrt{M}}$, or the original environment size split into $M$ equal sized and roughly convex regions.

%%%%%%%%%%%%%%%%%%%%%
\section{q-RRT Results and Discussion}\label{section5}
\subsection{Comparison With Ground Truth} \label{sec:validation-a} 
In the following, we evaluate $p^*$ and $p_2^*$ on a particular
example.
Quantum computers must find the number of correct solutions within the
database using the Quantum Counting
Algorithm~\cite{GB-PH-AT:98}, which is a mix of quantum phase
estimation and Grover's Algorithm. Due to the use of a quantum
computing simulation on a classical computer, this value is knowable.
To ascertain reachability, the oracle $\mathcal{X}$ uses the following
robot dynamics and control law and performs reference tracking from an
$\subscr{x}{parent}$ to a $\subscr{x}{new}$,
\begin{equation*} 
  x(t+1) = 
  Ax(t)+Bu(t)\:,
  \: x(0) = \subscr{x}{parent},
\end{equation*}
\begin{equation*}
  A = \begin{bmatrix}-1.5&-2\\
    1&3\end{bmatrix}, \,
  B= \begin{bmatrix}0.5&0.25\\0&1\end{bmatrix},
\end{equation*}
\begin{equation*}
    u(t) = -Kx(t),
\end{equation*}
\begin{equation*}
   K = \begin{bmatrix}1.9&-7.5\\
    1&7\end{bmatrix}.
\end{equation*}
The constant gain matrix $K$ can be any matrix such that the closed
loop system is stable.

In Fig.~\ref{fig:pstarest}, we compare $p^*$ and $p_2^*$ against a
histogram of $250$ simulations, in randomized environments, of a $2^{11}$ size database. Each of the $250$ simulations are grouped according to the
proportion of correct database solutions they provide. Cases are run
with $L = 32$, $r = 0.6$, and with a tree of $M = 5$ nodes. From the
figure, observe that $p^*$ forms a slightly high estimation, while
$p_2^*$ is validated to form an upper bound.
Since $p^*$ refers to mean connectivity, and
not reachability, the proportion of correct solutions, when we factor in provable reachability, is generally less than $p^*$. This explains why $p^*$ forms a slightly high estimation of the mean.
On the other hand, $p_2^*$
correctly upper bounds the proportion in the case where $M=5$.

\begin{figure}[h]
	\centering
	\includegraphics[width=.5\textwidth]{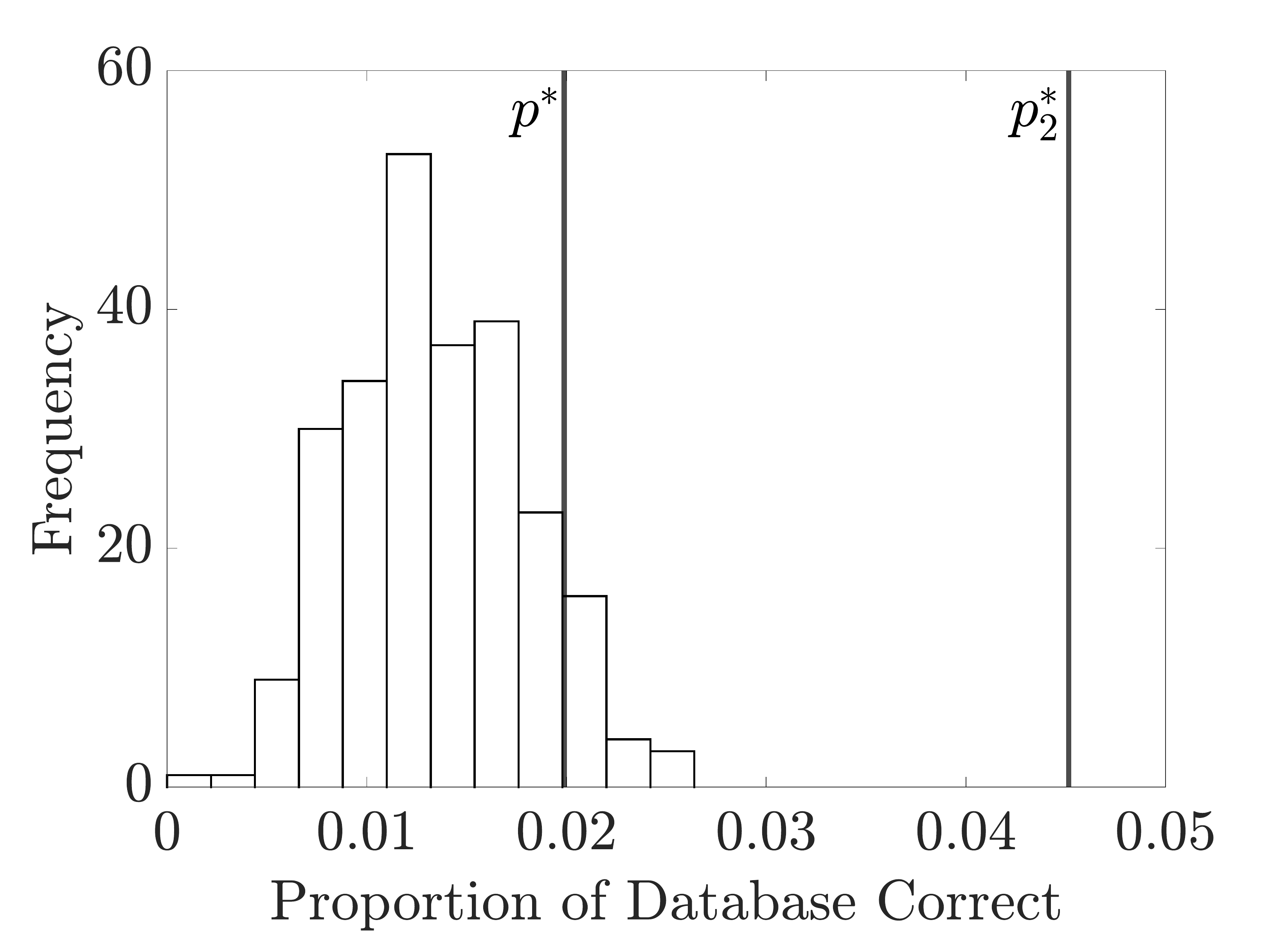}
	\caption{Comparison of $p^*$ and $p_2^*$ with a histogram of
          $250$ random cases of database size $2^{11}$ with $L = 32$,
          $r = 0.6$, and a tree of $M = 5$ nodes.}
	\label{fig:pstarest}
\end{figure}
%%%%%%%%%%%%%%%%%%%%%%%%%%%%%%%%%%%%%%

\subsection{Results in Dense Random Lattices}
In this section, we show the results of q-RRT creating a tree within
large connected components of random 2D lattices (explained in Section~\ref{section4}), as shown in Fig.~\ref{fig:environment}. We compare algorithm
performance with a classical, and largely identical, version of RRT
attempting to span the same connected component. The classical version
of RRT replaces the quantum database search with a classical oracle
check on a single point. All path planning simulations are performed in a $2$D environment run
with Matlab v2022b on a desktop computer with an Intel i5-4690K CPU
and an AMD RX 6600XT GPU. A selection of Matlab code is available at github.com/pdlathrop/QRRT. The quantum states and algorithms are
simulated using the Matlab Quantum Computing Functions
library~\cite{CF:03}. Simulations are run in a random square lattice
of size $L = 72$, where each method is
given a random start node within the largest connected
component. In the simulations, $r$ varies because concentration creates large differences in performance of both algorithms.
Fig.~\ref{fig:resultOracle} and Fig.~\ref{fig:resultTime}
show the average number of oracle calls and average real run-time of
each algorithm to create an $11$ node tree, which is an arbitrary number chosen to showcase average performance. Each data point is averaged over $50$
planning problems, in $50$ random environments. Both algorithms are tested against the same environments.

\begin{figure}[h]
	\centering
	\includegraphics[width=.5\textwidth]{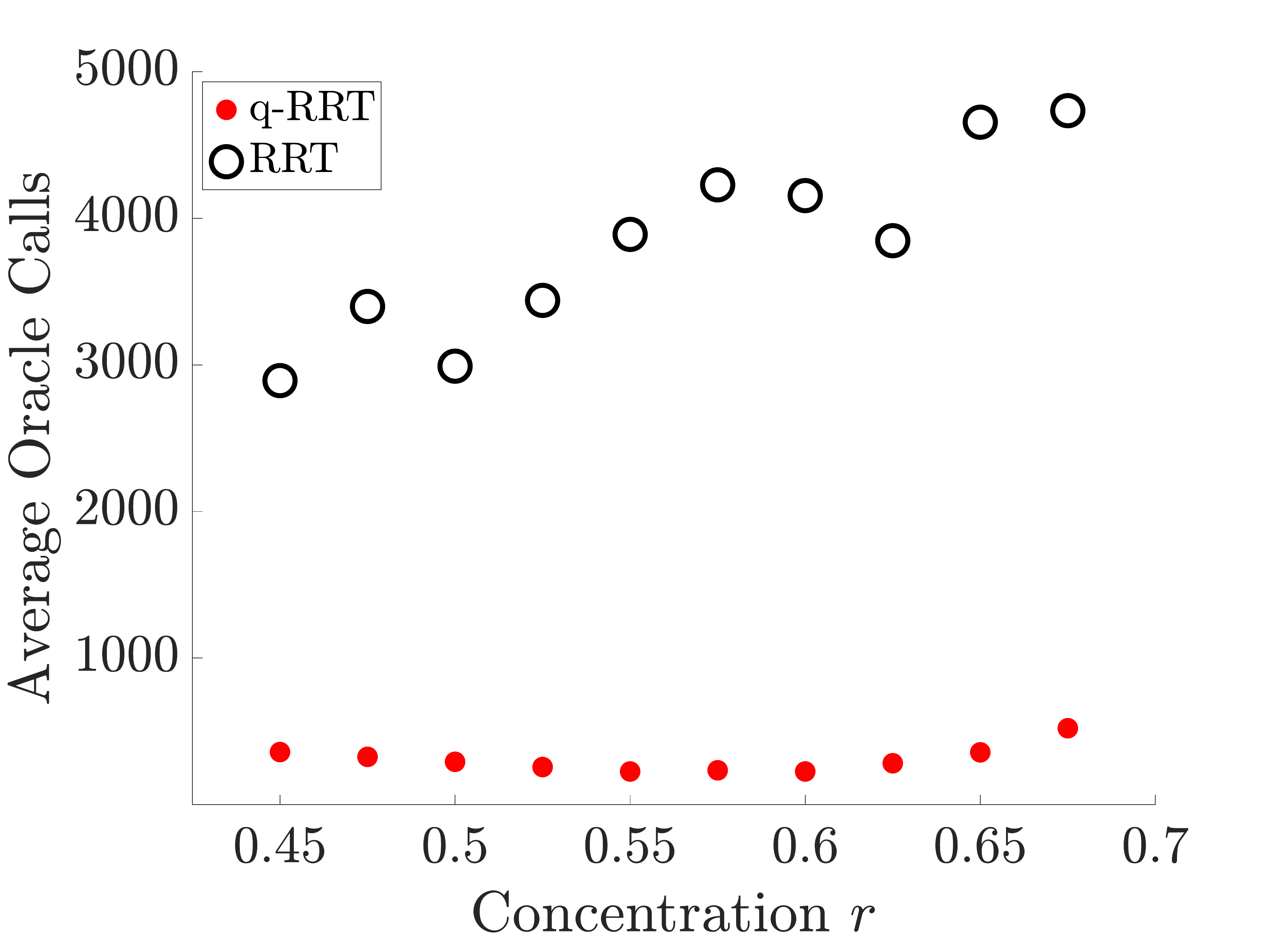}
	\caption{Comparison of the average number of oracle calls by q-RRT and RRT as concentration varies, for $L = 72$.}
	\label{fig:resultOracle}
\end{figure}

\begin{figure}[h]
	\centering
	\includegraphics[width=.5\textwidth]{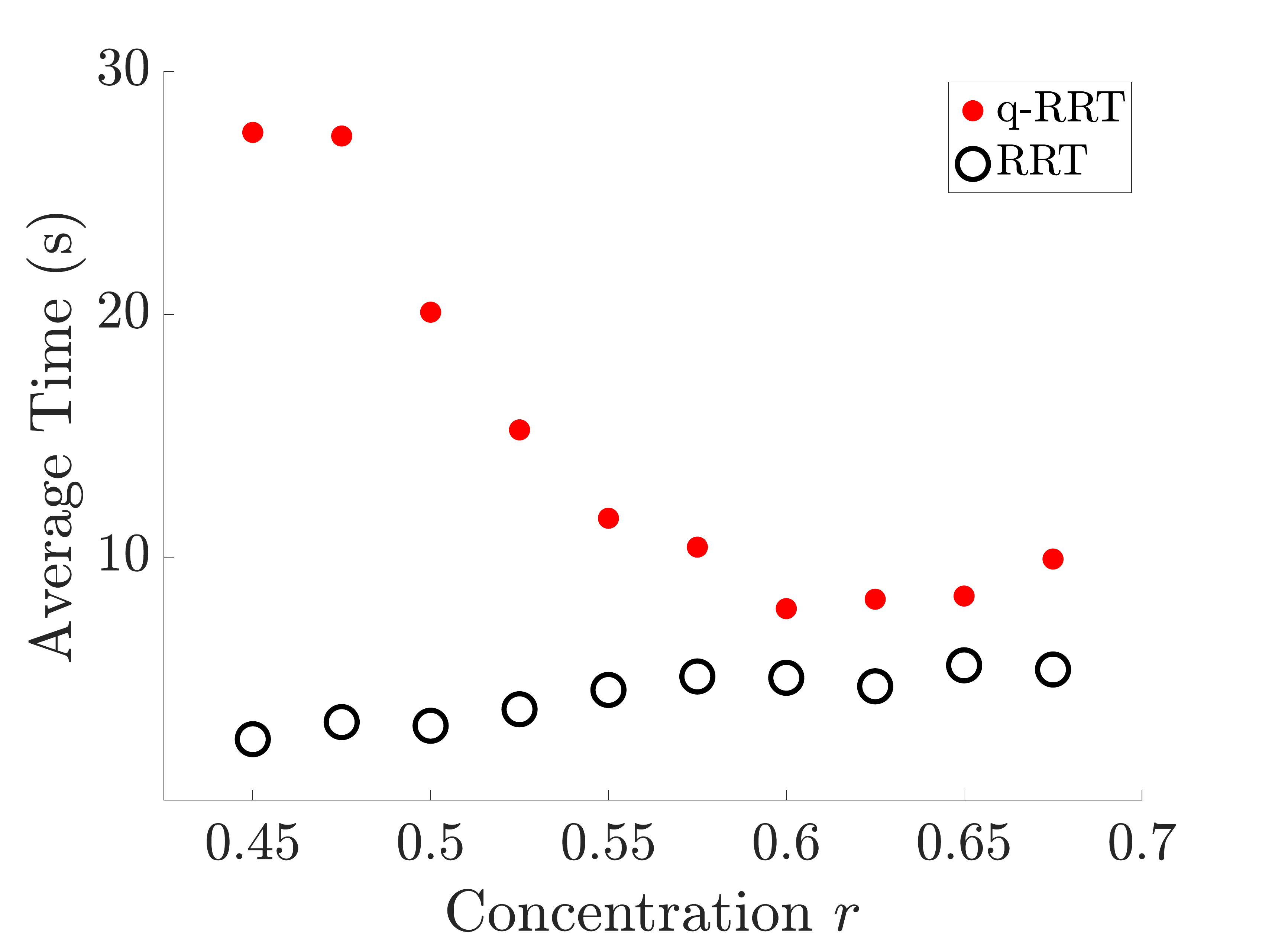}
	\caption{Comparison of the average real run-time of q-RRT and RRT as concentration varies, for $L = 72$.}
	\label{fig:resultTime}
\end{figure}

Over the range $[0.45,0.7]$ in concentration $r$, q-RRT averaged $308$
oracle calls, while the classical RRT averaged $3820$ oracle
calls, as a result of a quadratic performance increase. Algorithm q-RRT averaged $14.7$ seconds per case, compared to RRT's
average of $4.3$ seconds, and this is due to the implementation of quantum algorithms via arrays on a classical computer. On a quantum computer, the actual run-time
advantage would be proportional to the average oracle call
advantage. As $r$ increases, the average number of oracle calls also
generally increases due to the increased difficulty in making
connections in denser environments. For RRT, the average time per case
shows this increase because most of the algorithm run-time is in
performing reachability tests. For q-RRT, as $r$
increases, there is an initial decrease in average run-time, possibly
because at low $r$, the largest connected component tends to be very non-convex
and widely spread. This causes additional reachability tests to
be performed because of a run-time
optimization where points are excluded first based on whether they are
not in the same connected component, then based on
reachability.
 
Minimizing oracle calls is useful in situations where admitting points
to the tree has high computational cost, or where reachability checks
carry a cost. In our method, the oracle tests experimentally whether a
possible point (or database of points) is within the reachable set,
which is a complex problem to solve analytically for non-simple
systems~\cite{EA-TD-GF-AG-CLG-OM:06}. This can result in
significant time savings, and in some cases may allow offline
algorithms to become online. In large dense environments where most random points are not
admittable to the tree, many reachability tests must be performed
to admit even a single valid state. In such situations,
q-RRT far outperforms RRT in the ability to admit new nodes to the tree (per oracle call).
%%%%%%%%%%%%%%%%%%%%%%%%%%%%%%%%%%%%%%%%%
\subsection{Database Size Comparison}
In this section, we show the effect of variance of total database
size $2^n$ on the performance of q-RRT as compared to classical
RRT. In Fig.~\ref{fig:databasesizeoracle} and
Fig.~\ref{fig:databasesizetime} we show the average number of oracle
calls and the average real run-time, respectively, of q-RRT with
databases sized $2^8$ and $2^9$ for $L = 72$ while concentration
varies. Again, we compared q-RRT with RRT in creating a tree with $11$ nodes, and each data point is averaged over $50$
planning problems, in $50$ random environments.
\begin{figure}[h]
	\centering
	\includegraphics[width=.5\textwidth]{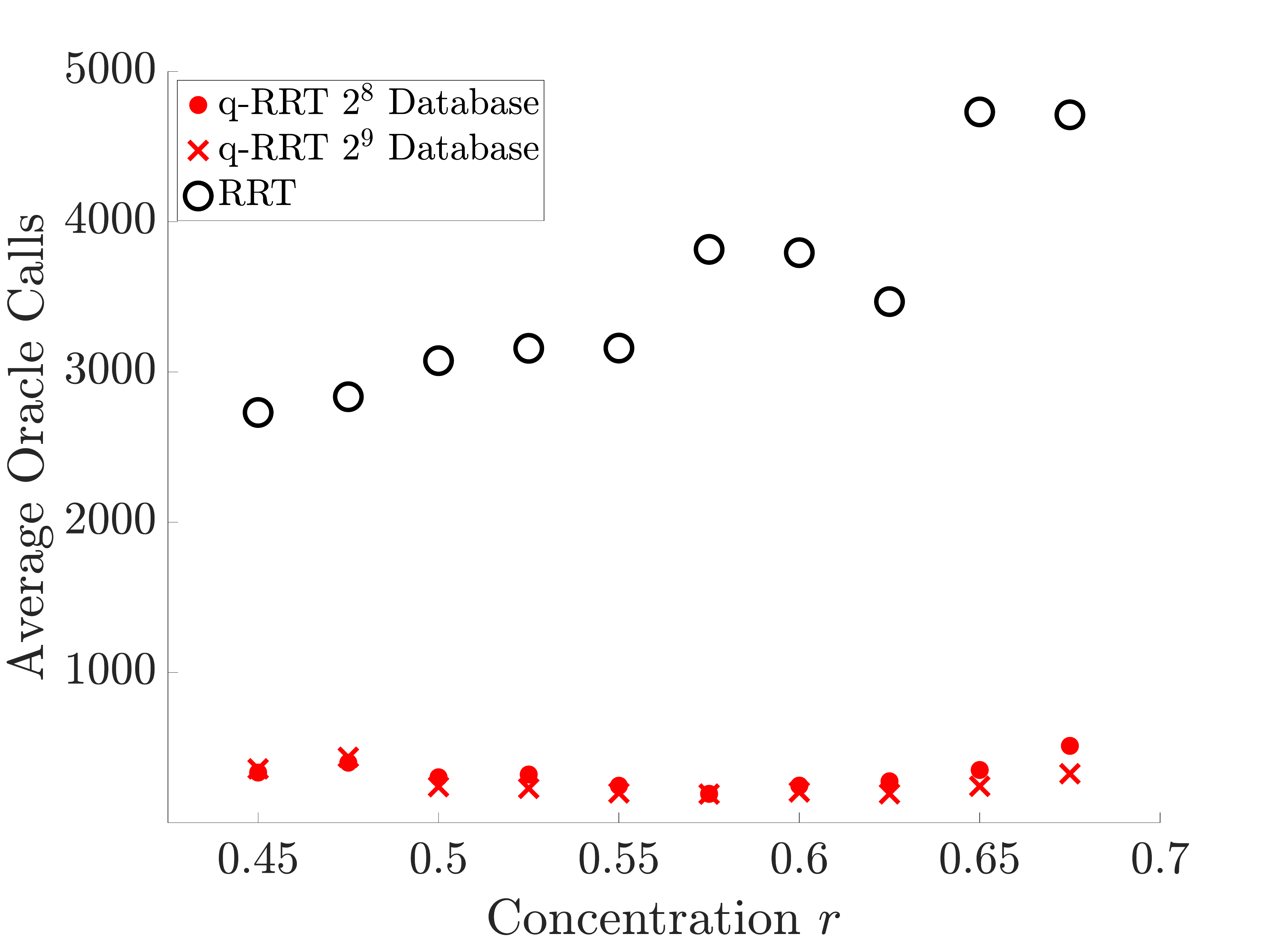}
	\caption{Comparison of the average number of oracle calls by
          q-RRT and RRT as concentration varies, for $L = 72$ and
          Database sizes $2^8$ and $2^9$.}
	\label{fig:databasesizeoracle}
\end{figure}
\begin{figure}[h]
	\centering
	\includegraphics[width=.5\textwidth]{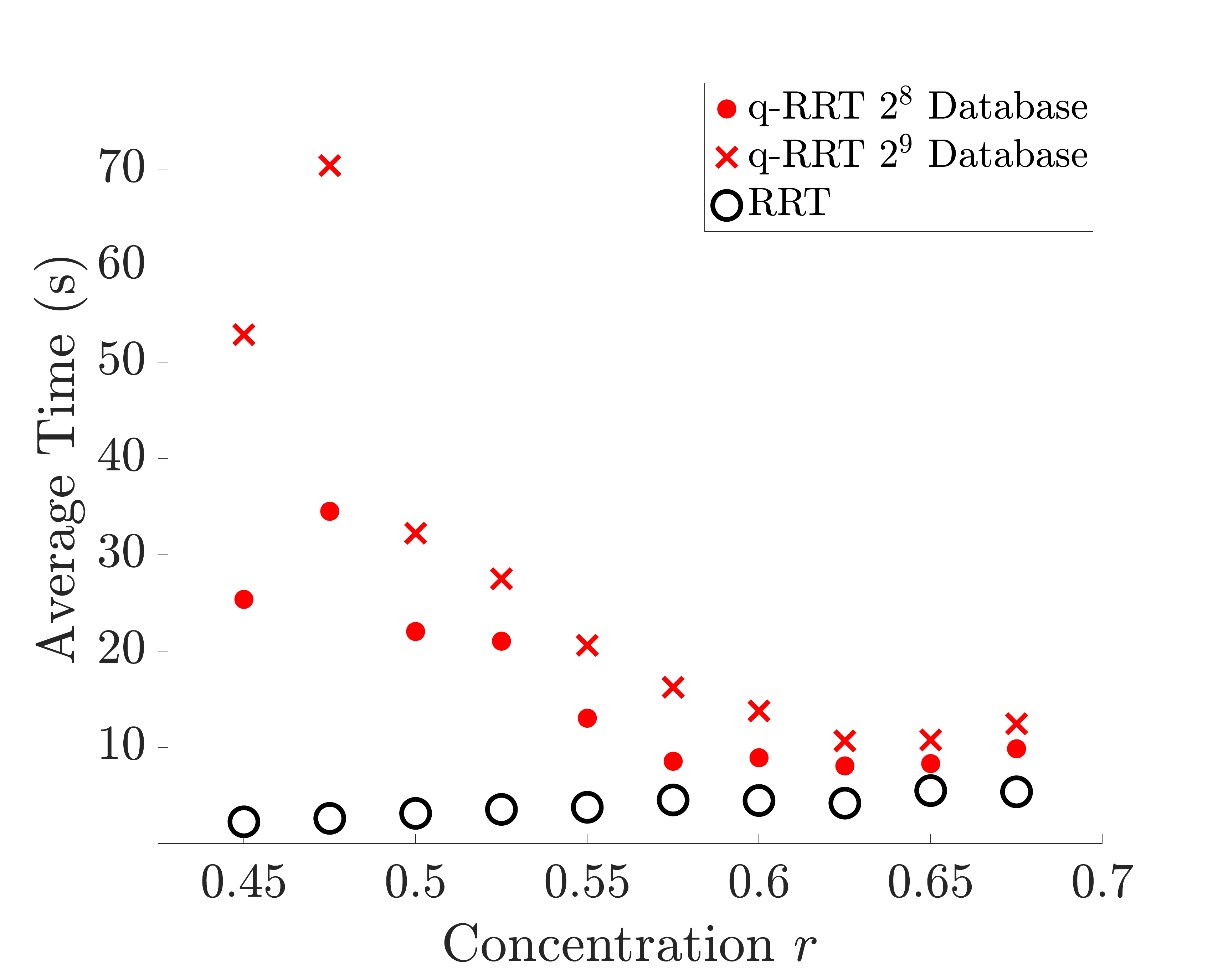}
	\caption{Comparison of the average run-time of q-RRT and
          RRT as $r$ varies, for $L = 72$ and Database sizes
          $2^8$ and $2^9$.}
	\label{fig:databasesizetime}
\end{figure}

We verify that changes to database sizing does not effect the overall
trend of average number of oracle calls or average run-time over
varying concentration. The larger $2^9$ sized database
resulted in lowered average oracle calls, especially at higher $r$,
when compared to $2^8$ sizing. This is consistent with the main reason
the quantum algorithm provides a reduction at all, which is the
ability to perform reachability tests on many possible states
simultaneously. Predictably, with respect to real run-time on a
classical computer, larger database versions of q-RPM take longer
across all $r$, as more reachability tests need to be performed with
the quantum computing simulation. However, on a quantum device, we
expect the run-time to be analogous to the number of oracle calls.
%%%%%%%%%%%%%%%%%%%%%%%%%%%%%%%%%%%%%%%%%

\subsection{Oracle Call Constraint}
In this section, we identify an approach for tree construction that limits the
optimal number of oracle calls to a maximum of $N_{\mathcal{X}}$
per node added to the tree. 
We may want to create databases
of correctness proportion $p$, rather than just predict $p$ from the
environmental parameters, especially in time limited cases. In order
to limit the number of optimal oracle calls to $N_\mathcal{X}$, we
constrain the $L_1$ (Manhattan) distance to evaluate reachability to be equal to a
certain value, which we call $D_{L1}$. This will ensure that the number of successful
reachable connections becomes higher in cluttered environments, thus
requiring a smaller number of oracle calls.
We consider an environment with
a fixed $L$ value and measure distance in terms of the $1$-norm or
Manhattan distance. The $1$-norm is chosen over the Euclidean distance as,
intuitively, it can yield a superior parameter for estimating
connectivity within a square lattice. In this context, the word optimum refers to the
number of oracle calls that maximizes the likelihood of measuring a
correct database element.
 
\begin{figure}[h]
	\centering
	\includegraphics[width=.5\textwidth]{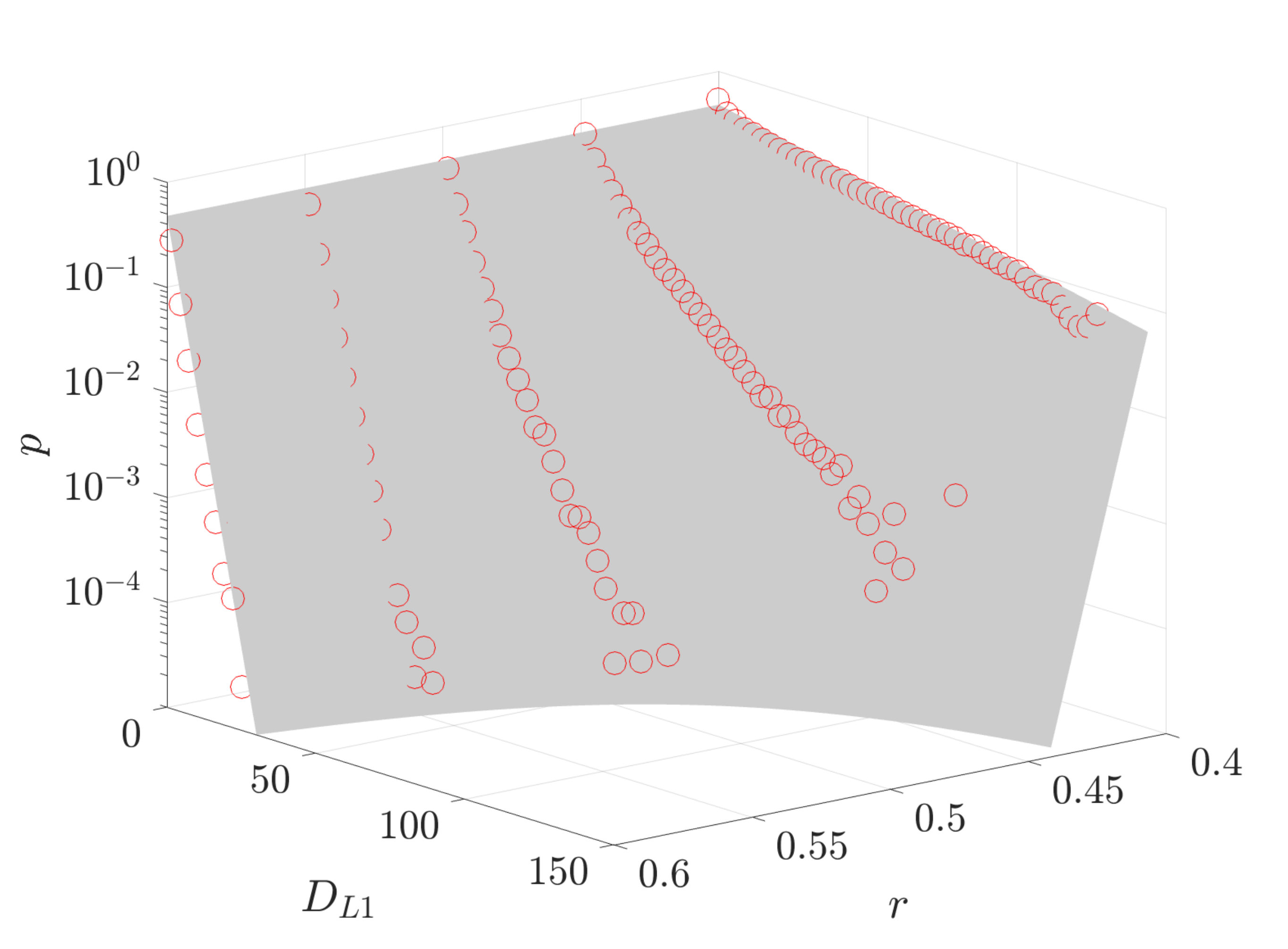}
	\caption{Semilog plot of numerically generated data points
          ($\circ$) estimating $p$, the likelihood of free-random
          point connectivity as a function of $D_{L1}$, the L$_1$
          distance between parent and child, for various
          concentrations $r$. Eq.~\eqref{eq:L1} is depicted as the
          gray surface. Data is generated with $L = 72$.}
	\label{fig:L1Connectivity}
\end{figure}

In Fig.~\ref{fig:L1Connectivity}, we show how
average connectivity $p$ scales according to a negative exponential
with increasing L$_1$ distance between parent and child. Values spread at the larger L$_1$ distances due to smaller sample sizes. For a given concentration
$r$ and an oracle call constraint $N_\mathcal{X}$,
Fig.~\ref{fig:L1Connectivity} can be used to select the maximum
$D_{L_1}$ that will select $N_\mathcal{X}$ as the approximate number of optimal oracle calls. To exemplify how such a tool can be used, we fit a model using nonlinear least squares to
the numerical $L = 72$ data shown in Fig.~\ref{fig:L1Connectivity}, which takes the form,
\begin{equation}\label{eq:L1}
    p = a\;e^{(br+c)\;D_{L_1}},
\end{equation}
with $a = 0.479$, $b = -1.72$, and $c = 0.674$ with coefficient of
determination $R^2 = 0.981$. Again, over-fitting is not a concern for
three parameters modeling $336$ data points.  Equivalently,
\begin{equation}\label{eq:L1inverse}
    D_{L_1} = \text{ln}\left(\frac{\pi^2}{16N_\mathcal{X}^2a}\right)/(br+c),
\end{equation}
when Eq.~\eqref{eq:L1} is solved for $D_{L_1}$ and $p$ is
related to $N_\mathcal{X}$ via Eq.~\eqref{eq:imax}.
Eq.~\eqref{eq:L1inverse} allows an algorithmic distance
constraint to be found from an oracle call constraint and the
environment concentration. From here, when q-RRT is building a
database, states should be randomly selected from the boundary of a
ball at radius $D_{L_1}$.  In order to further restrict oracle calls,
q-RRT can instead randomly sample within a ball of radius $D_{L_1}$,
which results in a lower mean $L_1$ distance, and therefore higher $p$
and lower number of oracle calls. 

\begin{figure}[h]
	\centering
	\includegraphics[width=.5\textwidth]{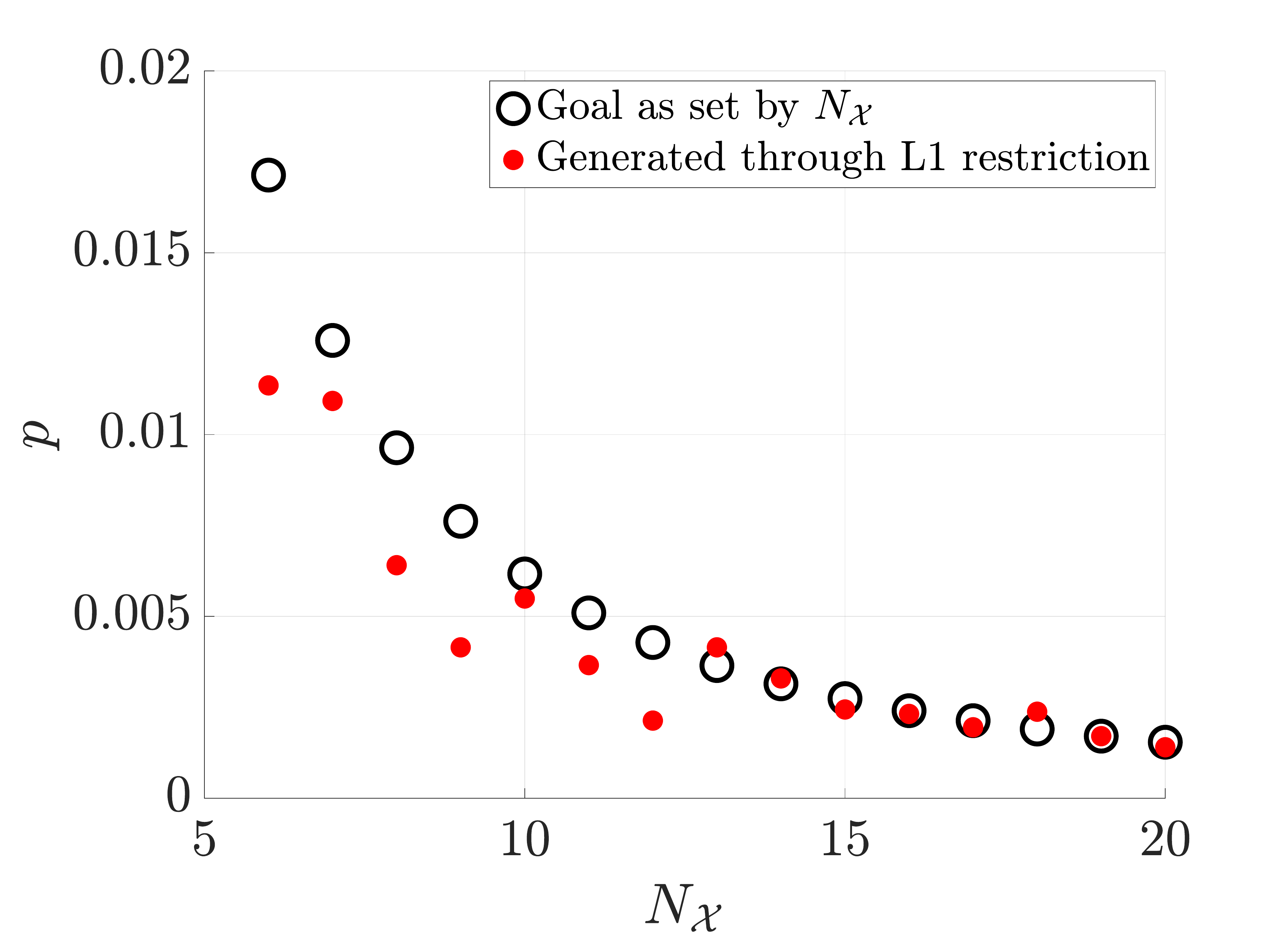}
	\caption{Evaluation of the ability to select $p$ given
          $N_\mathcal{X}$ using Eq.~\eqref{eq:L1}. Data
          is generated with $L = 72$, $r = 0.5$, and the database size
          $2^{14}$.}
	\label{fig:L1proof}
\end{figure}
An analysis of the ability to select $p$ given a
concentration $r$ and oracle call constraint $N_\mathcal{X}$ is given
in Fig.~\ref{fig:L1proof}. Datasets of size $2^{14}$ are constructed
in a random square lattice of $r = 0.5$ and $L = 72$ (chosen for discretizability) for various
$N_\mathcal{X}$. The goal points are the
$p$ that correspond with an optimum number of oracle calls
$N_\mathcal{X}$ to admit one node to the RRT. Therefore, the number of
oracle calls which yields the maximum likelihood of adding $M$ nodes
to the RRT from $M$ database creations is $MN_\mathcal{X}$. The use of
an $L_1$ restricting version of q-RRT alongside
Eq.~\eqref{eq:L1inverse} allows the creation of an $M$ node RRT
where $N_\mathcal{X}$ has been approximately chosen as the optimum number of oracle
calls per node. Fig.~\ref{fig:L1proof} shows that, as $N_\mathcal{X}$ increases, we have a more accurate ability  to select $p$.

%%%%%%%%%%%%%%%%%%%%%%%%%%%%%%%%%%%%%%%%%%%%%%%%%%%%%%%%%%
\section{Conclusion}
The goal of this work was to provide a first study of the application of quantum algorithms to sampling based robotic motion planning. 
We developed a full path quantum search algorithm for sparse environments
and a Quantum RRT algorithm for dense random square lattices. The
q-RRT algorithm uses Quantum Amplitude Amplification to search a
database of possible parent-child relationships for reachable states
to add to the tree. q-RRT, tested on a simulated quantum device, successfully employs a quadratic speedup of
database searches to reduce oracle reachability calls when
constructing a tree. We also provide key quantum measurement
probability results, and tools for estimating and selecting the number
of correct database entries using numerical modeling and guided
sampling. Future work includes path planning employing quantum mean
estimation for uncertainty modeling~\cite{PS:21}, implementing q-RRT
on disjoint trees~\cite{TL-FR-GF:19} for planning over multiple
disconnected components, employing a parallel quantum computing
structure to q-RRT, and exploring path-optimality based algorithms in the context of quantum computing.

%\section{Acknowledgements}
%Any opinions, findings, conclusions, or recommendations expressed in this paper are those of the authors and do not reflect the views of LANL. The authors would like to thank Calypso for valuable comments and discussions and Audrey Strachan for photographic guidance.

\bibliographystyle{IEEEtran}
\bibliography{alias,SM,SMD-add,JC}

\begin{IEEEbiography}[{\includegraphics[width=1in,height=1.25in,clip,keepaspectratio]{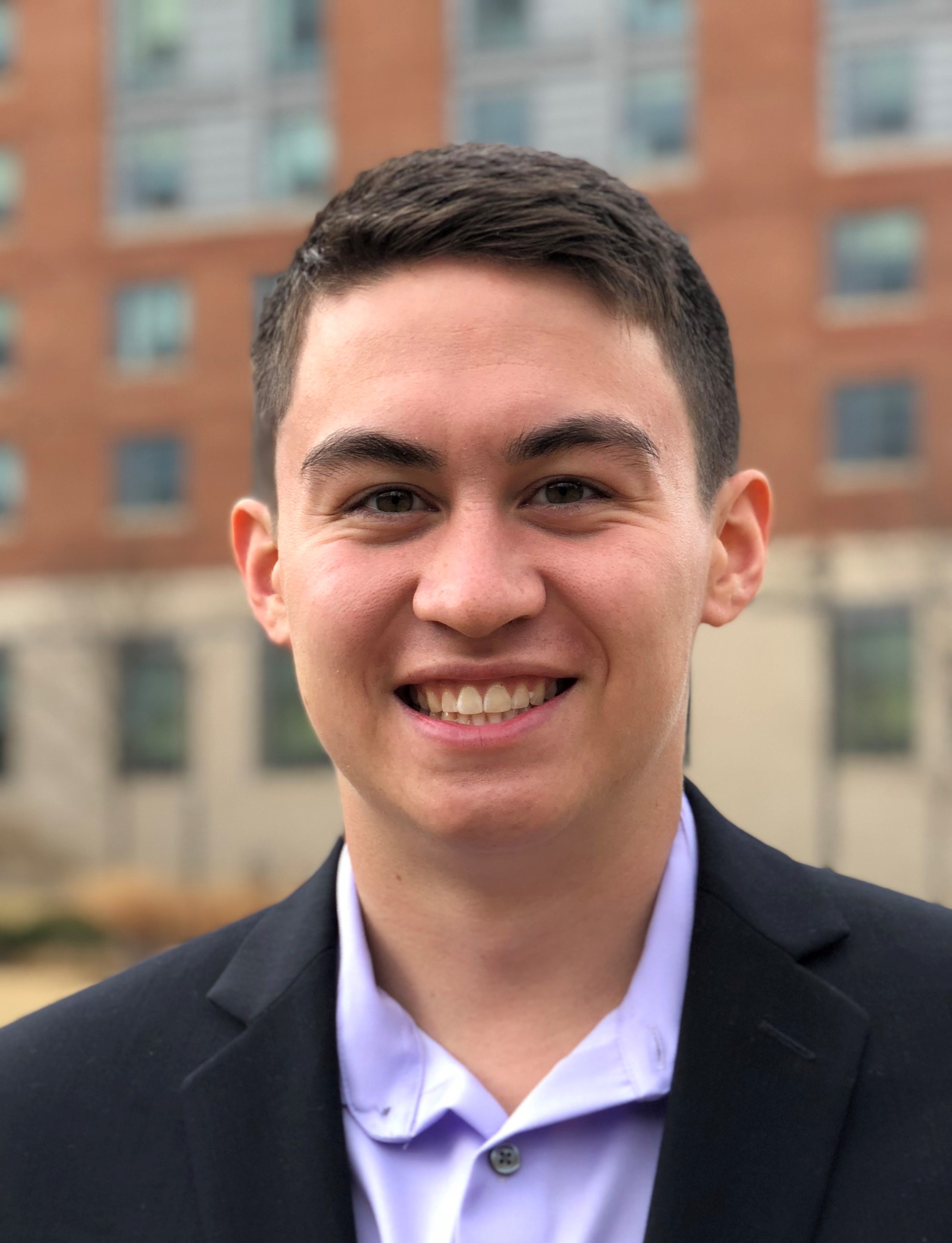}}]{Paul Lathrop} received the B.S. degree in Aerospace Engineering from the University of Maryland, College Park in 2019, the M.S degree in Aerospace Engineering from the University of California, San Diego in 2021, and is currently a Ph.D. candidate in Mechanical and Aerospace Engineering at the University of California, San Diego. His research interests include robotic motion planning and quantum algorithms.
\end{IEEEbiography}
\begin{IEEEbiography}[{\includegraphics[width=1in,height=1.25in,clip,keepaspectratio]{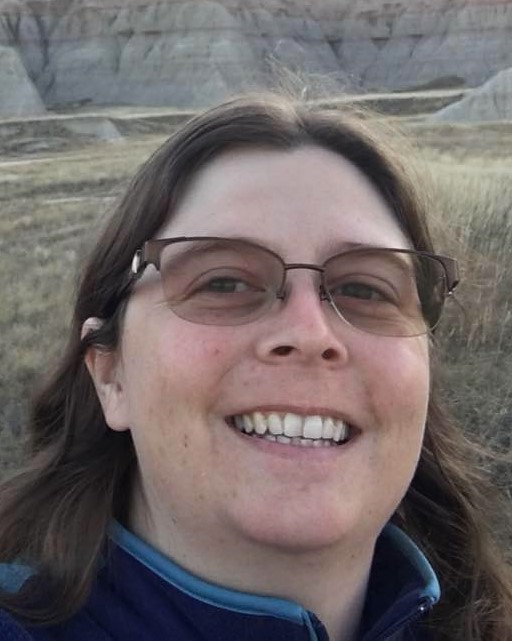}}]{Beth Boardman}
received a Ph.D. degree in aerospace engineering from the University of California, San Diego, USA in 2017 and a B.S. and M.S. in aeronautics and astronautics from the University of Washington, Seattle, USA in 2010 and 2012, respectively. She has worked as a Research and Development Engineer at Los Alamos National Laboratory since 2018. Her research interests include robotics and automation. Beth is currently the Team Leader for the Process Control and Robotics team in the Process Automation and Control group.
\end{IEEEbiography}
\begin{IEEEbiography}[{\includegraphics[width=1in,height=1.25in,clip,keepaspectratio]{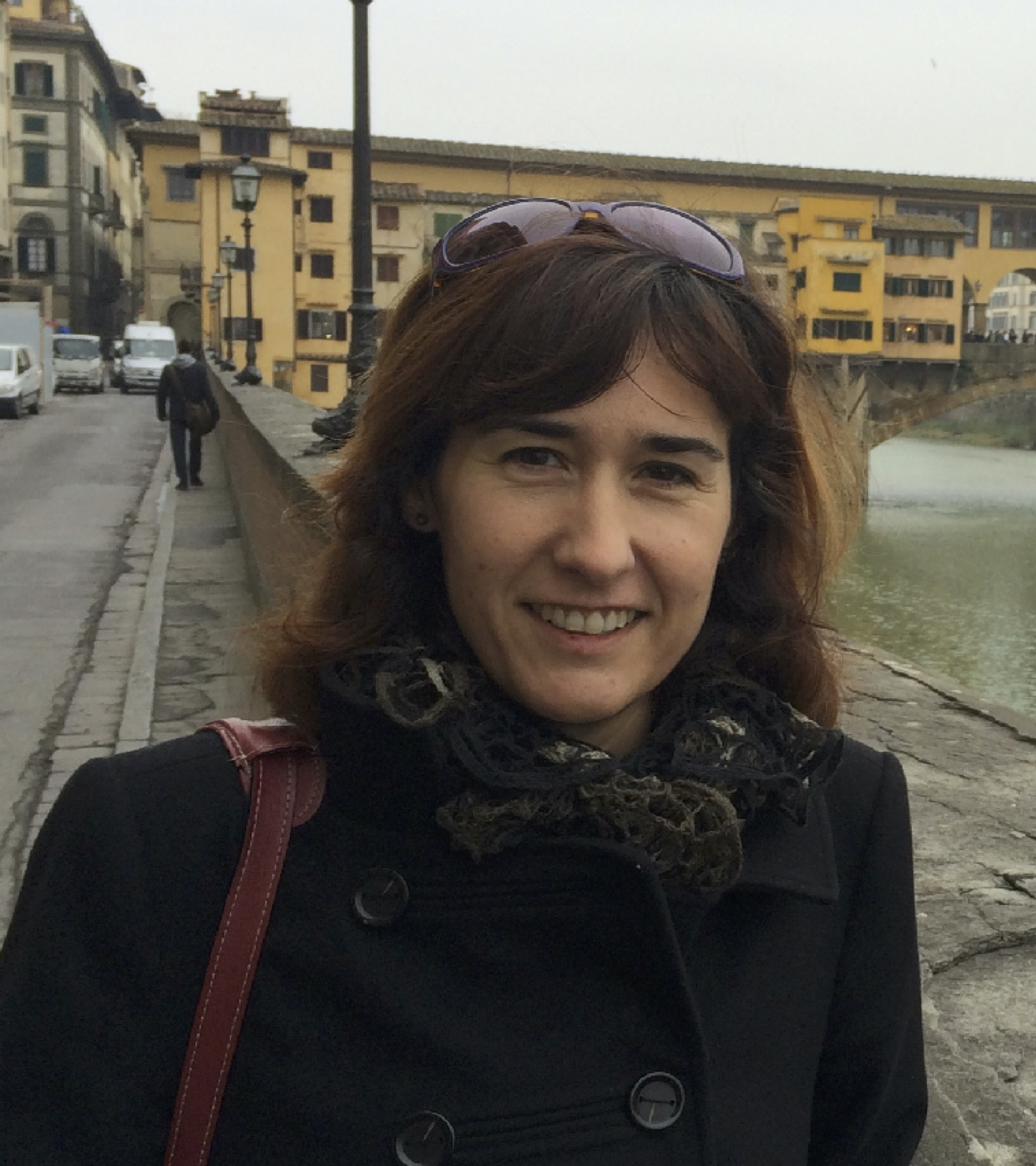}}]{Sonia Mart\'{i}nez}
 (M'02-SM'07-F'18) is a Professor of Mechanical and
  Aerospace Engineering at the University of California, San Diego,
  CA, USA. She received her Ph.D. degree in Engineering Mathematics
  from the Universidad Carlos III de Madrid, Spain, in May 2002. She
  was a Visiting Assistant Professor of Applied Mathematics at the
  Technical University of Catalonia, Spain (2002-2003), a Postdoctoral
  Fulbright Fellow at the Coordinated Science Laboratory of the
  University of Illinois, Urbana-Champaign (2003-2004) and the Center
  for Control, Dynamical systems and Computation of the University of
  California, Santa Barbara (2004-2005).  Her research interests
  include the control of networked systems, multi-agent systems,
  nonlinear control theory, and planning algorithms in robotics. She
  is a Fellow of IEEE. She is a co-author (together with F. Bullo and
  J. Cort\'es) of ``Distributed Control of Robotic Networks''
  (Princeton University Press, 2009). She is a co-author (together
  with M. Zhu) of ``Distributed Optimization-based Control of
  Multi-agent Networks in Complex Environments'' (Springer, 2015).
  She is the Editor in Chief of the recently launched \textit{CSS IEEE Open
  Journal of Control Systems.}
\end{IEEEbiography}

\EOD
\end{document}